\theoremstyle{definition}
\newtheorem{definition}{Definition}
\theoremstyle{definition}
\newtheorem{proposition}{Proposition}
\theoremstyle{definition}
\newtheorem{theorem}{Theorem}
\theoremstyle{definition}
\newtheorem{remark}{Remark}
\theoremstyle{definition}
\newtheorem*{problem}{Problem Statement}
\theoremstyle{definition}
\newtheorem{subproblem}{Subproblem}
\theoremstyle{definition}
\newtheorem{example}{Example}
\begin{document}

\title{Safe Navigation under Uncertain Obstacle Dynamics using Control Barrier Functions and 
Constrained Convex Generators}

\author{Hugo Matias, Daniel Silvestre
\thanks{This work was partially supported by the Portuguese Fundação para a Ciência e a Tecnologia 
(FCT) through the LARSyS FCT funding (DOI: 10.54499/LA/P/0083/2020, 10.54499/UIDP/50009/2020, 
and 10.54499/UIDB/50009/2020), and through the COPELABS, Lusófona University project 
UIDB/04111/2020.}
\thanks{Hugo Matias is with the Center of Technology and Systems (UNINOVA-CTS), NOVA School of 
Science and Technology (NOVA-FCT), 2829-516 Caparica, Portugal, and also with the Institute for 
Systems and Robotics (ISR-Lisbon), Instituto Superior Técnico (IST), 1049-001 Lisbon, Portugal 
(email: h.matias@campus.fct.unl.pt).}
\thanks{Daniel Silvestre is with the Center of Technology and Systems (UNINOVA-CTS), NOVA School of 
Science and Technology (NOVA-FCT), 2829-516 Caparica, Portugal, with the Institute for 
Systems and Robotics (ISR-Lisbon), Instituto Superior Técnico (IST), 1049-001 Lisbon, Portugal, and 
also with the COPELABS, Lusófona University, 1749-024 Lisbon, Portugal (e-mail: 
dsilvestre@fct.unl.pt).}}

\maketitle


\begin{abstract}
    This paper presents a sampled-data framework for the safe navigation of controlled 
    agents in environments cluttered with obstacles governed by uncertain linear dynamics. 
    Collision-free motion is achieved by combining Control Barrier Function (CBF)-based safety 
    filtering with set-valued state estimation using Constrained Convex Generators (CCGs). At each 
    sampling time, a CCG estimate of each obstacle is obtained using a finite-horizon guaranteed 
    estimation scheme and propagated over the sampling interval to obtain a CCG-valued flow that 
    describes the estimated obstacle evolution. However, since CCGs are defined indirectly---as an 
    affine transformation of a generator set subject to equality constraints, rather than as a 
    sublevel set of a scalar function---converting the estimated obstacle flows into CBFs is a 
    nontrivial task. One of the main contributions of this paper is a procedure to perform this 
    conversion, ultimately yielding a CBF via a convex optimization problem whose validity is 
    established by the Implicit Function Theorem. The resulting obstacle-specific CBFs are then 
    merged into a single CBF that is used to design a safe controller through the standard 
    Quadratic Program (QP)-based approach. Since CCGs support Minkowski sums, the proposed 
    framework also naturally handles rigid-body agents and generalizes existing CBF-based 
    rigid-body navigation designs to arbitrary agent and obstacle geometries. While the main 
    contribution is general, the paper primarily focuses on agents with first-order control-affine 
    dynamics and second-order strict-feedback dynamics. Simulation examples demonstrate the 
    effectiveness of the proposed method.
\end{abstract}

\begin{IEEEkeywords}
Nonlinear Systems, Constrained Control, State Estimation, Linear System Observers, Control Barrier 
Functions, Constrained Convex Generators
\end{IEEEkeywords}


\section{Introduction}

\IEEEPARstart{T}{he} collision-free motion of controlled agents in dynamic and uncertain 
environments is a fundamental requirement in many real-world applications. In cooperative 
scenarios, such as fleets of robots sharing a workspace or teams of autonomous vehicles 
coordinating maneuvers, agents may exchange information, yet remain subject to disturbances, 
measurement noise, and communication delays \cite{garcia2020guaranteed}, 
\cite{rego2025cooperative}. Conversely, in noncooperative or adversarial settings, where the 
environment is cluttered with uncontrolled objects and independent autonomous agents, these 
entities follow their trajectories without disclosing state information or intentions to the 
navigating agent \cite{bianchin2019secure}. Across these situations, safe navigation hinges on 
accurately capturing uncertainty, ideally while avoiding excessive conservatism.

In such contexts, guaranteed (set-valued) state estimation is a powerful alternative to stochastic 
filtering. Instead of relying on probabilistic assumptions, set-valued estimation algorithms 
compute sets that are guaranteed to contain the true state of a dynamical system, given bounds on 
the system inputs and the measurement noise \cite{combastel2016extended}. As highlighted in recent 
surveys, this framework has strong theoretical appeal and has sparked a lot of interest from the 
research community \cite{althoff2021comparison}, \cite{silvestre2024comparison}.

A key consideration when designing guaranteed state estimation algorithms is the set 
representation. For linear systems, numerous methods have been developed based on different set 
representations, evolving from intervals \cite{thabet2014effective}, \cite{tang2019interval} and 
ellipsoids \cite{kurzhanski2000ellipsoidal}, \cite{chernousko2005ellipsoidal} to more expressive 
structures such as zonotopes \cite{combastel2003state}, \cite{niazi2023resilient} and Constrained 
Zonotopes (CZs) \cite{scott2016constrained, silvestre2017stochastic}, which have a reduced
wrapping effect compared to intervals and ellipsoids. For nonlinear systems, these techniques can 
still be employed, provided that the dynamics are approximated by a linear model to allow set 
propagation, as performed in \cite{alamo2005guaranteed}, \cite{abdallah2008box}, 
\cite{julius2009trajectory}, \cite{rego2018set}, and \cite{wan2018guaranteed}, for each of the 
mentioned set classes.

Most recently, Constrained Convex Generators (CCGs) have emerged as a unifying set representation 
capable of capturing highly general convex sets, subsuming all previously discussed set classes and 
greatly reducing the need for approximations. CCGs are sets described through an affine
transformation of a generator set in a (typically) higher-dimensional space, subject to linear 
equality constraints, enabling expressive shapes while supporting key operations such as affine 
mappings, Minkowski sums, and generalized intersections. Such features make CCGs particularly well 
suited for set-based estimation, enabling tight enclosures even when the system dynamics and 
measurement model induce mixtures of polytopic and ellipsoidal geometry. Consequently, CCGs 
currently represent the state of the art in convex set-valued estimation 
\cite{silvestre2021constrained}, \cite{silvestre2022accurate}, \cite{silvestre2022set}, 
\cite{silvestre2023exact}, \cite{silvestre2024closed}. 

It is important to note, however, that a central challenge in guaranteed state estimation is the 
growth of the underlying data structures over time, requiring the use of order reduction methods to 
maintain a fixed computational load. Several order reduction techniques have been developed for 
specific set representations, including zonotopes \cite{combastel2015zonotopes}, ellipsotopes 
\cite{kousik2022ellipsotopes}, CZs \cite{scott2016constrained}, and CCGs \cite{rego2025novel}. More 
recently, \cite{rego2024explicit} introduces a CCG finite-horizon scheme that refines an estimate 
computed by an ellipsoidal observer using a limited history of measurements, thereby removing the 
need for order reduction methods.

Combining Model Predictive Control (MPC) with CCG-based state estimation for collision-free motion 
in the presence of obstacles with uncertain dynamics is relatively straightforward, as safety 
constraints concerning estimated CCG obstacle sets can be directly derived from the definition of a 
CCG. Since a CCG is expressed as an affine map of a generator set subject to linear equality 
constraints, these conditions can be encoded into the MPC formulation to enforce obstacle 
avoidance, with the generator variable of the CCG included as an optimization variable. However, 
this approach leads to an MPC formulation with nonconvex constraints, an issue that becomes even 
more severe when the agent dynamics are nonlinear \cite{lindqvist2020nonlinear}. As a result, 
despite the predictive advantages of MPC, it adds a significant computational burden for real-time 
implementation, even when resorting to convexification techniques \cite{silvestre2023model}, 
\cite{mao2017successive}, \cite{taborda2024convex}.

Over the last decade, Control Barrier Functions (CBFs) have emerged as a powerful tool for 
designing safe controllers for nonlinear systems \cite{ames2016control}, \cite{ames2019control}, 
\cite{matias2025hybrid}. One of the primary utilities of CBFs is their ability to serve as a safety 
filter for a nominal controller, which may not have been designed to ensure safety. Such safety 
filters are typically instantiated through Quadratic Programs (QPs), which can be efficiently 
executed in real time, to minimize the deviation from the nominal controller while satisfying a 
Lyapunov-like condition that guarantees forward invariance of a designated safe set 
\cite{jankovic2018robust}, \cite{cohen2023characterizing}, \cite{alyaseen2025continuity}. However, 
in contrast to MPC, combining CBF-based safety filtering with CCG-based state estimation for safe 
navigation in the presence of obstacles with uncertain dynamics is a nontrivial task. Since CCGs 
describe sets indirectly---as an affine map of a generator set subject to equality constraints, 
rather than as a sublevel set of a scalar function---estimated CCG obstacle sets cannot be directly 
translated into CBFs. To the best of our knowledge, no existing work addresses the problem of 
integrating CBF-based control with guaranteed state estimation based on CCGs or related set 
representations such as CZs.

For deterministic environments, however, some CBF-based methods have been proposed for safe 
navigation under different agent and obstacle geometries. The main challenge tackled by this line 
of research is that ensuring safety for a rigid-body agent requires maintaining the entire agent 
set in a safe region, while classical CBF-based techniques only ensure safety for a single point. 
Examples include CBF-based methods for robotic arms avoiding point obstacles 
\cite{hamatani2020collision}, circular agents navigating around ellipses, cardioids, diamonds, and 
squares via discrete barrier states and differential dynamic programming 
\cite{almubarak2022safety}, and distributed multi-agent navigation based on circular agent and 
obstacle geometries \cite{mestres2024distributed}. Further examples include CBF-based techniques 
for manipulators interacting with humans modeled as capsules \cite{landi2019safety}, and 
manipulators moving under more general obstacle geometries using the signed distance function 
\cite{singletary2022safety}.   

Additionally, a few studies have explored CBF-based strategies for safe navigation under polytopic 
geometries. For example, \cite{long2024safe} navigates polygonal agents in elliptical environments 
based on the polygon-ellipse distance, \cite{thirugnanam2022safety} addresses polytope-polytope 
avoidance via discrete-time CBF constraints, and \cite{tayal2024polygonal} designs polygonal 
cone CBFs based on the vertices of polygon obstacles. More recently, \cite{chen2025control} 
proposes an optimization-based CBF that directly considers the exact signed distance function 
between a polytopic agent and polytopic obstacles, while \cite{molnar2025navigating} presents an 
optimization-free alternative for polytope-polytope navigation based on smooth approximations of 
the maximum and minimum functions.

Despite recent progress, the aforementioned approaches are still quite specific, relying on 
constructions that do not readily extend to richer set classes, such as mixtures of polytopes and 
ellipsoids or sets defined by general $\ell_p$-norms. In this context, CCGs---beyond their original 
role in state estimation---provide a natural and unifying set representation for rigid-body 
navigation. Since CCGs are closed under Minkowski sums, rigid-body agents can be equivalently 
treated as single points while the obstacles are enlarged with the agent geometry, enabling a 
systematic treatment of general agent and obstacle shapes once a mechanism for translating CCGs 
into CBFs is developed.


\subsection{Main Contributions and Organization}

The main goal of this paper is to introduce the theoretical framework for integrating CBF-based 
control with guaranteed state estimation based on CCGs, providing the foundations for further 
developments in this direction. To this end, we address the safe navigation of a controlled agent 
in an environment populated with obstacles governed by uncertain linear dynamics, with measurements 
obtained at discrete sampling instants.

At each sampling instant, a CCG estimate of each obstacle is computed using a finite-horizon 
guaranteed estimation scheme that generalizes the approach from \cite{rego2024explicit}. Each 
estimate is then propagated over the sampling interval to obtain a CCG-valued flow that describes 
the estimated obstacle evolution, being the basis for safety enforcement. The central challenge 
then lies in translating these CCG-valued flows into (time-varying) CBFs. The main contribution of 
the paper is a procedure that enables this conversion, which ultimately produces a CBF via a convex 
optimization problem whose validity is established through the Implicit Function Theorem under mild 
regularity conditions. The resulting obstacle-specific CBFs are then combined into a single CBF 
using a smooth approximation of the minimum function, and the overall CBF is employed to synthesize 
a safe controller via the standard QP-based approach. 

Since CCGs support Minkowski sums, the proposed framework naturally handles rigid-body agents, 
generalizing existing CBF-based designs for rigid-body navigation to arbitrary agent and obstacle 
geometries. Also, while the main contribution is general, the paper initially focuses on agents 
with first-order control-affine dynamics and then extends the method to agents with second-order 
strict-feedback dynamics via backstepping.

The remainder of the paper is structured as follows. Section \ref{Sec:Background} provides 
essential mathematical background, and Section \ref{Sec:Problem} formulates the safe 
navigation problem addressed in the paper. Sections \ref{Sec:Solution1}, \ref{Sec:Solution2}, and 
\ref{Sec:Solution3} describe the proposed approach, with simulation results in Section 
\ref{Sec:Results}. Finally, Section \ref{Sec:Conclusion} draws conclusions and outlines future 
directions.


\subsection{Notation and General Definitions}

$\mathbb{N}$ is the nonnegative integers set. $\mathbb{R}$, $\mathbb{R}_{\geq0}$, and 
$\mathbb{R}_{>0}$ are the sets of real, nonnegative, and positive numbers, respectively. 
$\mathbb{R}^n$ is the $n$-dimensional euclidean space, and $\mathbb{S}^{n-1}$ is the unit sphere in 
$\mathbb{R}^n$. $\mathbb{R}^{n\times m}$ is the set of $n\times m$ real matrices, 
$\mathbb{R}^{n\times n}_{\succ 0}$ is the set of positive-definite matrices of size $n$, and 
$\text{SO}(n)$ is the special orthogonal group in $\mathbb{R}^n$. For a given set 
$\mathcal{S} \subseteq \mathbb{R}^n$, $\text{int}(\mathcal{S})$ and $\partial\mathcal{S}$ are the 
interior and boundary of $\mathcal{S}$, respectively. The $p$-norm of a vector 
$\mathbf{x} \in \mathbb{R}^n$ is denoted $\|\mathbf{x}\|_p$ ($\|\mathbf{x}\| = \|\mathbf{x}\|_2$), 
and for two vectors $\mathbf{x}_1 \in\mathbb{R}^{n_1}$, $\mathbf{x}_2 \in \mathbb{R}^{n_2}$, we 
often use the notation $(\mathbf{x}_1, \mathbf{x}_2) = 
\left[\mathbf{x}_1^\top\, \mathbf{x}_2^\top\right]^\top \in \mathbb{R}^{n_1+n_2}$. For a 
differentiable function $h: \mathbb{R}^n\times[t_0, t_\text{f}) \rightarrow \mathbb{R}$ and 
$\mathbf{G}: \mathbb{R}^n \rightarrow \mathbb{R}^{n \times m}$, we consider the Lie-derivative 
notation 
$L_\mathbf{G}h(\mathbf{x}, t) = 
\frac{\partial}{\partial\mathbf{x}}h(\mathbf{x}, t) \mathbf{G}(\mathbf{x})$, 
and $\Dot{\mathbf{x}}$ is the time derivative of $\mathbf{x} \in \mathbb{R}^n$. Additionally, 
$\mathbf{0}_{n\times m}$ is the $n\times m$ matrix of zeros, and $\mathbf{I}_{n}$ is the identity 
matrix of size $n$ (dimensions are often omitted when clear from context). The Cartesian product is 
represented by $\times$, the Minkowski sum by $\oplus$, and the generalized set intersection by 
$\cap_\mathbf{R}$. Finally, given the matrices $\mathbf{A}_1, \dots, \mathbf{A}_M$, 
$\text{diag}(\mathbf{A}_1, \dots, \mathbf{A}_M)$ yields a block diagonal matrix whose diagonal 
blocks are $\mathbf{A}_1, \dots, \mathbf{A}_M$.

\begin{definition}[Extended Class-$\mathcal{K}$/$\mathcal{K}_\infty$ Function]
    A continuous function $\alpha: \mathbb{R} \rightarrow \mathbb{R}$ is said to be an extended 
    class-$\mathcal{K}$ function if it is strictly increasing with $\alpha(0) = 0$, and it is an 
    extended class-$\mathcal{K}_\infty$ function if, additionally, 
    $\lim_{s\rightarrow\pm\infty}\alpha(s) = \pm\infty$.
\end{definition}

\begin{definition}[Graph of a Set-Valued Flow]
    Consider a set-valued flow $\mathcal{D}: [t_0, t_\text{f}) \rightrightarrows \mathbb{R}^n$. The 
    set $\mathcal{G}(\mathcal{D})$, defined as
    \begin{equation}
        \mathcal{G}(\mathcal{D}) =
        \{(\mathbf{x}, t) \in \mathbb{R}^n\times[t_0, t_\text{f}): \mathbf{x} \in \mathcal{D}(t)\},
    \end{equation}
    is said to be the graph of $\mathcal{D}$.
\end{definition}


\section{Mathematical Background} \label{Sec:Background}

This section provides essential mathematical background on CBFs. Since we will deal with dynamic 
obstacle avoidance, we present the concepts in a time-varying fashion, as the standard static 
formulations naturally arise as particular cases.

Consider a nonlinear control-affine system of the form
\begin{equation}
    \Dot{\mathbf{x}} = \mathbf{f}(\mathbf{x}) + \mathbf{G}(\mathbf{x})\mathbf{u},
    \label{Eq:ControlAffineSystem}
\end{equation}
where $\mathbf{x} \in \mathbb{R}^n$ is the system state, $\mathbf{u} \in \mathbb{R}^m$ is the 
control input, and the fields $\mathbf{f}: \mathbb{R}^n \rightarrow \mathbb{R}^n$ and 
$\mathbf{G}: \mathbb{R}^n \rightarrow \mathbb{R}^{n \times m}$ are locally Lipschitz. Let now 
$\mathbf{k}: \mathcal{G}(\mathcal{D}) \rightarrow \mathbb{R}^m$ be a feedback controller, defined 
over the graph of a set-valued flow 
$\mathcal{D}: [t_0, t_\text{f}) \rightrightarrows \mathbb{R}^n$. Applying $\mathbf{k}$ to 
\eqref{Eq:ControlAffineSystem} yields the time-varying closed-loop system
\begin{equation}
    \Dot{\mathbf{x}} = \mathbf{f}(\mathbf{x}) + \mathbf{G}(\mathbf{x})\mathbf{k}(\mathbf{x}, t).
    \label{Eq:ControlAffineSystemClosedLoop}
\end{equation}
As the functions $\mathbf{f}$ and $\mathbf{G}$ are locally Lipschitz, if the controller 
$\mathbf{k}$ is locally Lipschitz in $\mathbf{x}$ and continuous in $t$, then, for every initial 
condition $\mathbf{x}_0 \in \mathcal{D}(t_0)$, there exists a unique continuously differentiable 
solution $\bm{\varphi}: I(\mathbf{x}_0) \rightarrow \mathbb{R}^n$ satisfying 
\begin{equation}
    \begin{aligned}
        \Dot{\bm{\varphi}}(t) &= \mathbf{f}(\bm{\varphi}(t)) 
        + \mathbf{G}(\bm{\varphi}(t))\mathbf{k}(\bm{\varphi}(t), t),\\
        \bm{\varphi}(t_0) &= \mathbf{x}_0,
    \end{aligned}
\end{equation}
for all $t \in I(\mathbf{x}_0)$, where $I(\mathbf{x}_0) \subseteq [t_0, t_\text{f})$ denotes the 
maximal interval of existence for the solution \cite{perko2013differential}. In the remainder of 
this paper, we assume that $I(\mathbf{x}_0) = [t_0, t_\text{f})$ for convenience. Next, we revisit 
the standard definition of forward invariance for sets and extend it to set-valued flows.

\newpage

\begin{definition}[Forward Invariance (Set)]
    A set $\mathcal{C} \subset \mathbb{R}^n$ is said to be forward invariant for the system 
    \eqref{Eq:ControlAffineSystemClosedLoop} if, for every initial condition 
    $\mathbf{x}_0 \in \mathcal{C}$, we have $\bm{\varphi}(t) \in \mathcal{C}$ for all 
    $t \in [t_0, t_\text{f})$.
\end{definition}

\begin{definition}[Forward Invariance (Set-Valued Flow)]
    A set-valued flow $\mathcal{C}: [t_0, t_\text{f}) \rightrightarrows \mathbb{R}^n$ is said to 
    be forward invariant for the system \eqref{Eq:ControlAffineSystemClosedLoop} if, for every 
    initial condition $\mathbf{x}_0 \in \mathcal{C}(t_0)$, we have 
    $\bm{\varphi}(t) \in \mathcal{C}(t)$ for all $t \in [t_0, t_\text{f})$.
\end{definition}


\subsection{Control Barrier Functions}

We intend to ensure that, at every time instant, the solution of the closed-loop system 
\eqref{Eq:ControlAffineSystemClosedLoop} lies within a safe set 
$\mathcal{C}(t) \subset \mathbb{R}^n$, which amounts to ensuring forward invariance of a 
set-valued flow $\mathcal{C}: [t_0, t_\text{f}) \rightrightarrows \mathbb{R}^n$ for system
\eqref{Eq:ControlAffineSystemClosedLoop}. Particularly, we consider a set-valued flow $\mathcal{C}$ 
defined for all $t \in [t_0, t_\text{f})$ as
\begin{equation}
    \mathcal{C}(t) = \{\mathbf{x} \in \mathbb{R}^n: h(\mathbf{x}, t) \geq 0\},
    \label{Eq:SafeFlow}
\end{equation}
where $h: \mathbb{R}^n\times[t_0, t_\text{f}) \rightarrow \mathbb{R}$ is a continuously 
differentiable function with $\frac{\partial}{\partial\mathbf{x}}h(\mathbf{x}, t) \neq \mathbf{0}$ 
when $h(\mathbf{x}, t) = 0$. This regularity condition implies that, for all 
$t \in [t_0, t_\text{f})$,
\begin{equation}
    \begin{aligned}
        \partial\mathcal{C}(t) &= \{\mathbf{x} \in \mathbb{R}^n: h(\mathbf{x}, t) = 0\},\\
        \text{int}(\mathcal{C}(t)) &= \{\mathbf{x} \in \mathbb{R}^n: h(\mathbf{x}, t) > 0\}.
    \end{aligned}
\end{equation}
If $h$ has the properties of a CBF, then it can be used to design safe controllers for the system 
\eqref{Eq:ControlAffineSystem}.

\begin{definition}[CBF \cite{ames2016control}] \label{Def:CBF}
    Let $\mathcal{C}: [t_0, t_\text{f}) \rightrightarrows \mathbb{R}^n$ be a set-valued flow 
    defined by \eqref{Eq:SafeFlow}, for a continuously differentiable function 
    $h: \mathbb{R}^n\times[t_0, t_\text{f}) \rightarrow \mathbb{R}$ with 
    $\frac{\partial}{\partial\mathbf{x}}h(\mathbf{x}, t) \neq \mathbf{0}$ when 
    $h(\mathbf{x}, t) = 0$. The function $h$ is a (zeroing) CBF for the system 
    \eqref{Eq:ControlAffineSystem} if there exists a set-valued flow 
    $\mathcal{D}: [t_0, t_\text{f}) \rightrightarrows \mathbb{R}^n$ with 
    $\mathcal{C}(t) \subseteq \mathcal{D}(t)$ for all $t \in [t_0, t_\text{f})$ and an extended 
    class-$\mathcal{K}_\infty$ function $\alpha: \mathbb{R} \rightarrow \mathbb{R}$ such that, for 
    all $(\mathbf{x}, t) \in \mathcal{G}(\mathcal{D})$,
    \begin{equation}
        \sup_{\mathbf{u} \in \mathbb{R}^m}
        \Dot{h}(\mathbf{x}, t, \mathbf{u}) > - \alpha(h(\mathbf{x}, t)),
        \label{Eq:DefinitionCBF}
    \end{equation}
    where the function $\Dot{h}$ is defined as
    \begin{equation}
        \Dot{h}(\mathbf{x}, t, \mathbf{u}) = 
        L_\mathbf{f}h(\mathbf{x}, t) + L_\mathbf{G}h(\mathbf{x}, t)\mathbf{u} 
        + \frac{\partial h(\mathbf{x}, t)}{\partial t}.
    \end{equation}
\end{definition}

\vspace{1mm}

Such a definition means that a CBF is allowed to decrease in the interior of the safe set but not 
on its boundary. Given a CBF $h$ for \eqref{Eq:ControlAffineSystem} and a corresponding extended 
class-$\mathcal{K}_\infty$ function $\alpha$, we define the pointwise set of controls
\begin{equation}
    K_\text{CBF}(\mathbf{x}, t) = \left\{\mathbf{u} \in \mathbb{R}^m: 
    \Dot{h}(\mathbf{x}, t, \mathbf{u}) \geq -\alpha(h(\mathbf{x}, t))\right\}.
    \label{Eq:SetCBF}
\end{equation}
This yields the following main result concerning CBFs.

\begin{theorem}[Safe Controller \cite{ames2016control}]
    Let $\mathcal{C}: [t_0, t_\text{f}) \rightrightarrows \mathbb{R}^n$ be a set-valued flow 
    defined by \eqref{Eq:SafeFlow}, for a continuously differentiable function 
    $h: \mathbb{R}^n\times[t_0, t_\text{f}) \rightarrow \mathbb{R}$ such that 
    $\frac{\partial}{\partial\mathbf{x}}h(\mathbf{x}, t) \neq \mathbf{0}$ when 
    $h(\mathbf{x}, t) = 0$. If the function $h$ is a CBF for the system 
    \eqref{Eq:ControlAffineSystem} on a set-valued flow 
    $\mathcal{D}: [t_0, t_\text{f}) \rightrightarrows \mathbb{R}^n$, then the set 
    $K_\text{CBF}(\mathbf{x}, t)$ is nonempty for all 
    $(\mathbf{x}, t) \in \mathcal{G}(\mathcal{D})$, and every feedback controller 
    $\mathbf{k}: \mathcal{G}(\mathcal{D}) \rightarrow \mathbb{R}^m$ that is locally Lipschitz in 
    $\mathbf{x}$, continuous in $t$, and with 
    $\mathbf{k}(\mathbf{x}, t) \in K_\text{CBF}(\mathbf{x}, t)$ for all 
    $(\mathbf{x}, t) \in \mathcal{G}(\mathcal{D})$ renders $\mathcal{C}$ forward 
    invariant for the resulting closed-loop system.
\end{theorem}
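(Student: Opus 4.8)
The plan is to establish the two assertions separately: first that $K_\text{CBF}(\mathbf{x}, t)$ is nonempty on $\mathcal{G}(\mathcal{D})$, and then that any admissible controller $\mathbf{k}$ renders $\mathcal{C}$ forward invariant.

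For nonemptiness, I would exploit only that $\dot{h}(\mathbf{x}, t, \mathbf{u})$ is affine in $\mathbf{u}$, so the CBF inequality \eqref{Eq:DefinitionCBF} suffices. Fix $(\mathbf{x}, t) \in \mathcal{G}(\mathcal{D})$ and set $c = -\alpha(h(\mathbf{x}, t))$. Since $\sup_{\mathbf{u} \in \mathbb{R}^m} \dot{h}(\mathbf{x}, t, \mathbf{u}) > c$, the value $c$ is not an upper bound of the set $\{\dot{h}(\mathbf{x}, t, \mathbf{u}) : \mathbf{u} \in \mathbb{R}^m\}$, so there exists $\mathbf{u}^\ast$ with $\dot{h}(\mathbf{x}, t, \mathbf{u}^\ast) > c = -\alpha(h(\mathbf{x}, t))$; a fortiori the defining inequality of \eqref{Eq:SetCBF} holds, i.e. $\mathbf{u}^\ast \in K_\text{CBF}(\mathbf{x}, t)$. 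Concretely this covers the two cases $L_\mathbf{G}h \neq \mathbf{0}$, where the supremum is $+\infty$, and $L_\mathbf{G}h = \mathbf{0}$, where $\dot{h}$ is independent of $\mathbf{u}$ and $K_\text{CBF}(\mathbf{x}, t) = \mathbb{R}^m$.

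For forward invariance, fix $\mathbf{x}_0 \in \mathcal{C}(t_0)$ and let $\bm{\varphi}$ be the resulting closed-loop solution, which exists, is unique, and is continuously differentiable because $\mathbf{k}$ is locally Lipschitz in $\mathbf{x}$ and continuous in $t$. I would study the scalar function $H(t) = h(\bm{\varphi}(t), t)$, which is $C^1$ with $H(t_0) \geq 0$. By the chain rule and the definition of $\dot{h}$, one has $\dot{H}(t) = \dot{h}(\bm{\varphi}(t), t, \mathbf{k}(\bm{\varphi}(t), t))$, so whenever $(\bm{\varphi}(t), t) \in \mathcal{G}(\mathcal{D})$ the membership $\mathbf{k}(\bm{\varphi}(t), t) \in K_\text{CBF}(\bm{\varphi}(t), t)$ yields the differential inequality $\dot{H}(t) \geq -\alpha(H(t))$. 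The goal is then to show this inequality forces $H(t) \geq 0$ for all $t$. I would argue by contradiction: if $H(t_1) < 0$ for some $t_1 > t_0$, set $\tau = \sup\{t \in [t_0, t_1] : H(t) \geq 0\}$, so that continuity gives $H(\tau) = 0$ while $H < 0$ on $(\tau, t_1]$. Because $\alpha$ is strictly increasing with $\alpha(0) = 0$, $H(t) < 0$ forces $-\alpha(H(t)) > 0$ and hence $\dot{H}(t) > 0$ on $(\tau, t_1]$; integrating the continuous $\dot{H}$ from $\tau$ then gives $H(t_1) = H(\tau) + \int_\tau^{t_1} \dot{H}(s)\,ds > 0$, contradicting $H(t_1) < 0$. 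Thus $\bm{\varphi}(t) \in \mathcal{C}(t)$ for all $t$.

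The main obstacle I anticipate is justifying that the inequality $\dot{H} \geq -\alpha(H)$ remains available along any putative excursion of $\bm{\varphi}$ beyond $\partial\mathcal{C}$: it holds only where $(\bm{\varphi}(t), t) \in \mathcal{G}(\mathcal{D})$, whereas the contradiction argument probes times at which $H < 0$, i.e. $\bm{\varphi}(t) \notin \mathcal{C}(t)$. This is precisely why $h$ is required to be a CBF on the larger flow $\mathcal{D} \supseteq \mathcal{C}$ rather than on $\mathcal{C}$ alone: at the crossing time $\tau$ one has $\bm{\varphi}(\tau) \in \partial\mathcal{C}(\tau) \subseteq \mathcal{D}(\tau)$, and continuity of $\bm{\varphi}$ keeps $(\bm{\varphi}(t), t)$ inside $\mathcal{G}(\mathcal{D})$ on a nontrivial interval to the right of $\tau$, on which the inequality—and hence the contradiction—is valid. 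A cleaner but essentially equivalent route is to invoke the comparison lemma against $\dot{y} = -\alpha(y)$ with $y(t_0) = H(t_0) \geq 0$, where $\alpha(0) = 0$ keeps the comparison solution nonnegative; here the strict monotonicity of $\alpha$ is exactly what makes the nonnegativity of $y$, and therefore of $H$, go through without requiring $\alpha$ to be locally Lipschitz.
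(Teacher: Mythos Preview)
The paper does not supply its own proof of this theorem: it is stated as background and attributed to \cite{ames2016control}, with no proof environment following the statement. So there is nothing in the paper to compare your argument against line by line.

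That said, your proposal is a faithful reconstruction of the standard argument from the CBF literature. The nonemptiness step is immediate from the strict inequality in \eqref{Eq:DefinitionCBF}, exactly as you write. For forward invariance, your contradiction argument on $H(t)=h(\bm{\varphi}(t),t)$ is the usual one, and you have correctly identified the only delicate point: the differential inequality $\dot H\geq -\alpha(H)$ is available only while $(\bm{\varphi}(t),t)\in\mathcal{G}(\mathcal{D})$, so one must argue that any putative excursion out of $\mathcal{C}$ is caught before the trajectory leaves $\mathcal{D}$. Your localization around the first crossing time $\tau$ handles this, since $\bm{\varphi}(\tau)\in\partial\mathcal{C}(\tau)\subseteq\mathcal{D}(\tau)$ and continuity keeps the trajectory in $\mathcal{D}$ on a right neighborhood of $\tau$; restricting $t_1$ to that neighborhood already yields the contradiction. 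The comparison-lemma phrasing you mention at the end is the cleaner packaging of the same idea and is how most references present it.
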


\begin{remark}
    The strictness of the inequality \eqref{Eq:DefinitionCBF} enables proving that 
    optimization-based controllers relying on CBFs are locally Lipschitz continuous in $\mathbf{x}$ 
    \cite{morris2013sufficient}, \cite{jankovic2018robust}.
\end{remark}


\subsection{Safety Filters}

One of the major utilities of CBFs is their ability to serve as a safety filter for a nominal 
controller $\mathbf{k}_\text{d}: \mathbb{R}^n\times [t_0, t_\text{f}) \rightarrow \mathbb{R}^m$. A 
safety filter is a controller that modifies $\mathbf{k}_\text{d}$, preferably in a minimally 
invasive fashion, such that the resulting closed-loop system is safe. Given a CBF 
$h: \mathbb{R}^n\times[t_0, t_\text{f}) \rightarrow \mathbb{R}$ for \eqref{Eq:ControlAffineSystem} 
on $\mathcal{D}: [t_0, t_\text{f}) \rightrightarrows \mathbb{R}^n$, the typical approach for 
designing a safety filter $\mathbf{k}: \mathcal{G}(\mathcal{D}) \rightarrow \mathbb{R}^m$ is 
through the following QP: 
\begin{equation} 
    \begin{aligned}
        \mathbf{k}(\mathbf{x}, t) = \underset{\mathbf{u} \in \mathbb{R}^m}{\arg\min}\,\,
        &\frac{1}{2}\|\mathbf{u} - \mathbf{k}_\text{d}(\mathbf{x}, t)\|^2\\
        \text{subject to}\,\, &\Dot{h}(\mathbf{x}, t, \mathbf{u}) \geq -\alpha(h(\mathbf{x}, t)),
    \end{aligned}
    \label{Eq:SafetyFilter}
\end{equation}
where $\alpha$ is an extended class-$\mathcal{K}_\infty$ function associated with the CBF. 
As per the Karush-Kuhn-Tucker (KKT) conditions, the controller defined in \eqref{Eq:SafetyFilter} 
can be expressed in closed form as
\begin{equation}
    \mathbf{k}(\mathbf{x}, t) = \mathbf{k}_\text{d}(\mathbf{x}, t) 
    + \mu(\mathbf{x}, t)L_\mathbf{G}h(\mathbf{x}, t)^\top,
\end{equation}
where $\mu(\mathbf{x}, t)$ is the KKT multiplier associated with the CBF constraint, defined for 
all $(\mathbf{x}, t) \in \mathcal{G}(\mathcal{D})$ as
\begin{equation}
    \mu(\mathbf{x}, t) =
    \begin{cases}
        \bar{\mu}(\mathbf{x}, t), &\hspace{-2mm}\text{if }
        \Dot{h}(\mathbf{x}, t, \mathbf{k}_\text{d}(\mathbf{x}, t)) \leq -\alpha(h(\mathbf{x}, t)),\\
        0, &\hspace{-2mm}\text{if } \Dot{h}(\mathbf{x}, t, \mathbf{k}_\text{d}(\mathbf{x}, t)) 
        > -\alpha(h(\mathbf{x}, t)),
    \end{cases}
\end{equation}
where the expression corresponding to the first case is
\begin{equation}
    \bar{\mu}(\mathbf{x}, t) = 
    -\frac{\Dot{h}(\mathbf{x}, t, \mathbf{k}_\text{d}(\mathbf{x}, t)) + \alpha(h(\mathbf{x}, t))}
    {\|L_\mathbf{G}h(\mathbf{x}, t)\|^2}.
\end{equation}

Also, the controller defined in \eqref{Eq:SafetyFilter} is locally Lipschitz in $\mathbf{x}$ 
and continuous in $t$, provided that the CBF gradient field is locally Lipschitz in 
$\mathbf{x}$, the function $\alpha$ is locally Lipschitz, and the nominal controller is locally 
Lipschitz in $\mathbf{x}$ and continuous in $t$ \cite{morris2013sufficient}, 
\cite{jankovic2018robust}. When input bounds must be taken into account, an enhanced formulation is 
required, such as the optimal-decay QP introduced in \cite{zeng2021safety} and further studied in 
\cite{ong2025properties}.


\section{Problem Formulation} \label{Sec:Problem}

We consider a rigid-body agent whose configuration at any time instant is described by the set
\begin{equation}
    \mathcal{P}(\mathbf{p}) = \mathbf{p} + \bar{\mathcal{P}},
\end{equation}
where $\mathbf{p} \in \mathbb{R}^p$ is the position of the agent, matching its center, and 
$\bar{\mathcal{P}} \subset \mathbb{R}^p$ is a compact convex set with nonempty interior that 
defines the shape of the agent relative to its center. Also, we consider two types of dynamics for 
the agent, with different relative degrees. In the first scenario, the agent dynamics are described 
by a first-order control-affine model of the form
\begin{equation}
    \Dot{\mathbf{p}} = \mathbf{f}(\mathbf{p}) + \mathbf{G}(\mathbf{p})\mathbf{z},
    \label{Eq:AgentDynamics1}
\end{equation}
where the state matches the position of the agent, $\mathbf{z} \in \mathbb{R}^z$ is the control 
input, the functions $\mathbf{f}: \mathbb{R}^p \rightarrow \mathbb{R}^p$ and 
$\mathbf{G}: \mathbb{R}^p \rightarrow \mathbb{R}^{p\times z}$ are locally Lipschitz, and the matrix 
$\mathbf{G}(\mathbf{p})$ has full row rank for all $\mathbf{p} \in \mathbb{R}^p$. In the second 
scenario, the agent dynamics are extended to a second-order strict-feedback model of the form
\begin{equation}
    \begin{aligned}
        \Dot{\mathbf{p}} &= \mathbf{f}(\mathbf{p}) + \mathbf{G}(\mathbf{p})\mathbf{z},\\
        \Dot{\mathbf{z}} &= \mathbf{f}_1(\mathbf{p}, \mathbf{z}) 
        + \mathbf{G}_1(\mathbf{p}, \mathbf{z})\mathbf{u},
    \end{aligned}
    \label{Eq:AgentDynamics2}
\end{equation}
where the state is now the pair $(\mathbf{p}, \mathbf{z}) \in \mathbb{R}^p\times\mathbb{R}^z$, 
$\mathbf{u} \in \mathbb{R}^m$ is the control input, the additional fields 
$\mathbf{f}_1: \mathbb{R}^p\times\mathbb{R}^z \rightarrow \mathbb{R}^z$ and 
$\mathbf{G}_1: \mathbb{R}^p\times\mathbb{R}^z \rightarrow \mathbb{R}^{z\times m}$ are also locally 
Lipschitz, and the gain matrix $\mathbf{G}_1(\mathbf{p}, \mathbf{z})$ has full row rank for all 
$(\mathbf{p}, \mathbf{z}) \in \mathbb{R}^p\times\mathbb{R}^z$.

\newpage

The agent will operate in an environment cluttered with $M$ dynamic obstacles, indexed by 
$i \in \mathcal{I} = \{1, \dots, M\}$. For each $i \in \mathcal{I}$, the obstacle $i$ is 
represented by the set
\begin{equation}
    \mathcal{O}_i(\mathbf{o}_i) = \mathbf{o}_i + \bar{\mathcal{O}}_i,
\end{equation}
where $\mathbf{o}_i \in \mathbb{R}^p$ is the position of the obstacle, which matches its center, 
and $\bar{\mathcal{O}}_i \subset \mathbb{R}^p$ is a compact convex set with nonempty interior that 
defines the geometry of the obstacle relative to its center. From the standpoint of the agent, the 
obstacles evolve according to uncertain linear dynamics. More specifically, the motion of obstacle 
$i$ is governed by the linear model
\begin{equation}
    \begin{aligned}
        \Dot{\mathbf{x}}_i &= \mathbf{F}_i\mathbf{x}_i + \mathbf{w}_i,\\
        \mathbf{o}_i &= \mathbf{E}_i\mathbf{x}_i,
    \end{aligned}
    \label{Eq:ObstacleDynamics}
\end{equation}
where $\mathbf{x}_i \in \mathbb{R}^{n_i}$ is the state, $\mathbf{w}_i \in \mathbb{R}^{n_i}$ 
represents an uncertain input, $\mathbf{F}_i \in \mathbb{R}^{n_i\times n_i}$ is a known matrix that 
describes the free dynamics, and $\mathbf{E}_i \in \mathbb{R}^{p\times n_i}$ is an auxiliary matrix 
that extracts the position from the state. The initial state $\mathbf{x}_{i, 0}$ and input 
$\mathbf{w}_i$ are known to belong to compact convex sets 
$\mathcal{X}_{i, 0}, \mathcal{W}_i \subset \mathbb{R}^{n_i}$.

At each sampling instant $t_k = kT_s$, for $k \in \mathbb{N}$ and sampling period 
$T_s \in \mathbb{R}_{>0}$, the agent obtains a measurement $\mathbf{y}_{i, k} \in \mathbb{R}^{y_i}$ 
of the state of each obstacle $i \in \mathcal{I}$. Each measurement follows a linear observation 
model of the form
\begin{equation}
    \mathbf{y}_{i, k} = \mathbf{C}_i\mathbf{x}_{i, k} + \mathbf{v}_{i, k},
    \label{Eq:MeasurementModel}
\end{equation}
where $\mathbf{C}_i \in \mathbb{R}^{y_i\times n_i}$ is the observation matrix, and 
$\mathbf{v}_{i, k} \in \mathbb{R}^{y_i}$ is the measurement noise, which belongs to a compact 
convex set $\mathcal{V}_i \subset \mathbb{R}^{y_i}$. Also, $\mathbf{x}_{i, k}$ denotes the state of 
the $i$th obstacle at the sampling instant $t_k$. With this setup in place, we are now ready to 
formally state the problem addressed in this paper.

\begin{problem}
    Building on the previous setup, consider also a nominal controller for the agent, defined for 
    all $t \in \mathbb{R}_{\geq0}$, that encodes a desired control objective. Then, design a 
    control algorithm that modifies the nominal controller, preferably in a minimally invasive way, 
    to ensure that the agent never collides with any obstacle. Formally, the algorithm must 
    guarantee that
    \begin{equation}
        \mathcal{P}(\bm{\varphi}(t)) 
        \cap \left(\bigcup_{i \in \mathcal{\mathcal{I}}}\mathcal{O}_i(\bm{\varphi}_i(t))\right) 
        = \emptyset
        \label{Eq:ObstacleAvoidance}
    \end{equation}
    for all $t \in \mathbb{R}_{\geq0}$, where 
    $\bm{\varphi}: \mathbb{R}_{\geq0} \rightarrow \mathbb{R}^p$ is the trajectory of the agent, and 
    $\bm{\varphi}_i: \mathbb{R}_{\geq0} \rightarrow \mathbb{R}^p$ is the trajectory of obstacle $i$.
\end{problem}

\begin{remark}[Practical Scenarios]
    This problem setup captures several practical scenarios of interest:
    \begin{itemize}
        \item Cooperative agents: an obstacle may correspond to another autonomous agent that 
        shares information, such as its control law and estimates. In such cases, $\mathbf{F}_i$ 
        represents the resulting closed-loop dynamics, while $\mathbf{w}_i$ accounts for bounded 
        disturbances and modeling errors.
        \item Noncooperative agents and uncontrolled objects: Additionally, an obstacle may 
        correspond to a noncooperative agent or uncontrolled object for which only input bounds are 
        known. In such cases, the control input and disturbances are absorbed into the uncertain 
        input $\mathbf{w}_i$, i.e., 
        \begin{equation}
            \mathbf{w}_i = \mathbf{B}_i\mathbf{u}_i + \mathbf{d}_i,
        \end{equation}
        where $\mathbf{u}_i$ is a control input known to belong to a compact convex \hfill set 
        \hfill $\mathcal{U}_i$ \hfill and \hfill $\mathbf{d}_i$ \hfill is \hfill a \hfill 
        disturbance \hfill that \hfill belongs \hfill to \hfill a
        
        \newpage
        
        compact convex set $\mathcal{D}_i$, and consequently 
        \begin{equation}
            \mathcal{W}_i = \mathbf{B}_i\,\mathcal{U}_i \oplus \mathcal{D}_i.
        \end{equation}
        Additionally, measurements are limited to onboard sensing affected by bounded noise.
    \end{itemize}
\end{remark}

\begin{remark}[Agent Geometry]
    Note that we consider that the agent moves without rotation, so its attitude is not modeled and 
    the body set $\bar{\mathcal{P}}$ is constant. This assumption is made only for clarity of 
    exposition and because agent rotation is not central to the focus of this paper. The approach 
    proposed herein readily extends to rotating agents, in which case $\bar{\mathcal{P}}$ becomes 
    explicitly dependent on the orientation of the agent. Also, the position $\mathbf{p}$ does not 
    actually need to be the physical center of the agent; it may be any fixed point on the body. 
    Under a coordinate change $\mathbf{p}' = \mathbf{p} + \mathbf{t}$, the body set of the agent 
    becomes $\bar{\mathcal{P}}' = \bar{\mathcal{P}} - \mathbf{t}$.
\end{remark}

\begin{remark}[Agent Dynamics]
    The full row rank of the gain matrices from the models in \eqref{Eq:AgentDynamics1} and 
    \eqref{Eq:AgentDynamics2} means that the agent is fully actuated. Nevertheless, we highlight 
    that these models are quite general. Even if the dynamics do not originally match any of these 
    forms, they can often be transformed to do so via a suitable change of coordinates. For 
    instance, in the case of the unicycle model, controlling a point slightly displaced from the 
    original control point yields a first-order control-affine system with a full-row-rank gain 
    matrix \cite{glotfelter2019hybrid}. This trick is commonly used for controlling underactuated 
    vehicles \cite{aguiar2007trajectory}, \cite{reis2022nonlinear}. Also, we emphasize that these 
    models capture most cases of practical interest, such as ground, marine, and aerial vehicles 
    \cite{jiangdagger1997tracking}, \cite{yang2004combined}, \cite{kim2004robust}, spacecraft 
    \cite{farrell2005backstepping}, \cite{sun2016adaptive}, and others. Despite the generality of 
    the systems under consideration, we emphasize that these serve only to allow a more systematic 
    control design and are not directly related to the main contributions of this paper.
\end{remark}


\section{Proposed Solution} \label{Sec:Solution1}

This section provides a high-level description of our solution approach, which relies on two main 
components: a guaranteed state estimation algorithm based on CCGs (Section \ref{Sec:Solution2}), 
and a conversion procedure from CCGs to CBFs (Section \ref{Sec:Solution3}).

For convenience, we begin by observing that we can equivalently treat the agent as a point-mass 
agent and account for its body by enlarging the obstacle sets accordingly. Specifically, for each 
$i \in \mathcal{I}$, we introduce the enlarged obstacle set 
\begin{equation}
    \mathcal{O}^+_i(\mathbf{o}_i) = \mathbf{o}_i + \bar{\mathcal{O}}^+_i 
    = \mathbf{o}_i + \bar{\mathcal{O}}_i \oplus \left(-\bar{\mathcal{P}}\right),
\end{equation}
which incorporates the agent body. With such a definition, the obstacle avoidance condition can be 
equivalently expressed as
\begin{equation}
    \bm{\varphi}(t)
    \notin \bigcup_{i \in \mathcal{\mathcal{I}}}\mathcal{O}^+_i(\bm{\varphi}_i(t))
\end{equation}
for all $t \in \mathbb{R}_{\geq0}$. From now on, we adopt this equivalent point-mass standpoint for 
describing the proposed solution.

Our approach builds upon a state-of-the-art guaranteed state estimation algorithm based on CCGs, 
whose implementation is detailed in Section \ref{Sec:Solution2}. This algorithm is used for 
obtaining set-valued estimates of the obstacle states, which are guaranteed to contain the actual 
states. Specifically, at each sampling instant $t_k$, we apply this algorithm to each obstacle 
$i \in \mathcal{I}$, obtaining a CCG state estimate 
$\hat{\mathcal{X}}_{i, k} \subset \mathbb{R}^{n_i}$ that contains the actual state 
$\mathbf{x}_{i, k}$, based on the measurements collected up to that time.

\newpage

During the sampling interval $[t_k, t_{k+1})$, we can then obtain guaranteed estimates of the 
obstacle sets by propagating the obstacle-position estimates through the obstacle dynamics and 
inflating them by the obstacle bodies. To capture the estimates of the $i$th obstacle during the 
sampling interval, we introduce the set-valued flow 
$\hat{\mathcal{O}}^+_{i, k}: [t_k, t_{k+1}) \rightrightarrows \mathbb{R}^p$, which assigns to each 
time $t \in [t_k, t_{k+1})$ an estimated obstacle set $\hat{\mathcal{O}}^+_{i, k}(t)$ that includes 
the true obstacle set. Such an estimate is constructed by projecting the reachable set of states 
onto the position space and adding the obstacle body set as follows:
\begin{equation}
    \hat{\mathcal{O}}^+_{i, k}(t) = \mathbf{E}_i
    \left(\mathbf{\Phi}_i(t - t_k)\hat{\mathcal{X}}_{i, k} 
    \oplus \mathbf{\Gamma}_i(t - t_k)\tilde{\mathcal{W}}_i\right)
    \oplus \bar{\mathcal{O}}^+_i,
\end{equation}
where $\mathbf{\Phi}_i(s)$ denotes the state-transition matrix, given by
\begin{equation}
    \mathbf{\Phi}_i(s) = \exp(\mathbf{F}_is).
\end{equation}
Also, if the input $\mathbf{w}_i$ is known to be constant over the sampling interval, the function 
$\mathbf{\Gamma}_i: \mathbb{R} \rightarrow \mathbb{R}^{n_i\times n_i}$ is given exactly by
\begin{equation}
    \mathbf{\Gamma}_i(s) = \int_0^s\mathbf{\Phi}_i(\tau)d\tau,
\end{equation}
and the set $\tilde{\mathcal{W}}_i$ matches $\mathcal{W}_i$. If the evolution of the input over the 
sampling interval is not known, we can construct an over-approximation of the reachable set by 
defining $\mathbf{\Gamma}_i$ as
\begin{equation}
    \mathbf{\Gamma}_i(s) = \frac{\exp(\|\mathbf{F}_i\|s) - 1}{\|\mathbf{F}_i\|}
    \max_{\mathbf{w}_i \in \mathcal{W}_i} \|\mathbf{w}_i\|,
\end{equation}
in which case the set $\tilde{\mathcal{W}}_i$ is defined as a unit ball centered at the origin for 
the euclidean norm (see, e.g., \cite{girard2005reachability}).

For each $i \in \mathcal{I}$, we then convert the set-valued flow $\hat{\mathcal{O}}^+_{i, k}$ into 
a CBF $h_{i, k}: \mathbb{R}^p\times[t_k, t_{k+1}) \rightarrow \mathbb{R}$ for 
\eqref{Eq:AgentDynamics1}, which defines a safe set-valued flow 
$\mathcal{C}_{i, k}: [t_k, t_{k+1}) \rightrightarrows \mathbb{R}^p$ for the agent as
\begin{equation}
    \mathcal{C}_{i, k}(t) = \{\mathbf{p} \in \mathbb{R}^p: h_{i, k}(\mathbf{p}, t) \geq 0\},
\end{equation}
such that $\mathcal{C}_{i, k}(t) \subseteq \mathbb{R}^p\setminus\hat{\mathcal{O}}^+_{i, k}(t)$ for 
all $t \in [t_k, t_{k+1})$. However, the challenge here is that the estimated obstacle set 
$\hat{\mathcal{O}}^+_{i, k}(t)$ will be a CCG, rendering the conversion nontrivial. The proposed 
conversion procedure is presented in Section \ref{Sec:Solution3}.

At each time $t \in [t_k, t_{k+1})$, the overall safe set for the agent position is the 
intersection of the individual safe sets,
\begin{equation}
    \bigcap_{i \in \mathcal{I}}\mathcal{C}_{i, k}(t) = \{\mathbf{p} \in \mathbb{R}^p:
    h_{i, k}(\mathbf{p}, t) \geq 0 \text{ for all } i \in \mathcal{I}\},
\end{equation}
which can be compactly written using the minimum function:
\begin{equation}
    \bigcap_{i \in \mathcal{I}}\mathcal{C}_{i, k}(t) = \left\{\mathbf{p} \in \mathbb{R}^p:
    \min_{i \in \mathcal{\mathcal{I}}} h_{i, k}(\mathbf{p}, t) \geq 0\right\}.
\end{equation}
However, as the minimum function is not differentiable, it cannot be directly used to define a 
CBF candidate. To address this, we consider a smooth approximation of the minimum function, 
adopting the LogSumExp approach from \cite{molnar2023composing}. Specifically, we define an overall 
safe set-valued flow $\mathcal{C}_k: [t_k, t_{k+1}) \rightrightarrows \mathbb{R}^p$ as
\begin{equation}
    \mathcal{C}_k(t) = \{\mathbf{p} \in \mathbb{R}^p: h_k(\mathbf{p}, t) \geq 0\},
\end{equation}
where the overall CBF candidate $h_k: \mathbb{R}^p\times[t_k, t_{k+1}) \rightarrow \mathbb{R}$ is 
defined for all 
$(\mathbf{p}, t) \in \mathbb{R}^p\times[t_k, t_{k+1})$ as
\begin{equation}
    h_k(\mathbf{p}, t) = -\frac{1}{\beta_k}
    \ln\left(\sum_{i \in \mathcal{I}}\exp(-\beta_k h_{i, k}(\mathbf{p}, t))\right) 
    - \frac{b}{\beta_k},
    \label{Eq:OverallCBF}
\end{equation}
with tuning parameters $\beta_k, b \in \mathbb{R}_{>0}$. This construction ensures

\newpage

\noindent that, for all $(\mathbf{p}, t) \in \mathbb{R}^p\times[t_k, t_{k+1})$, we have
\begin{equation}
    h_k(\mathbf{p}, t) < \min_{i \in \mathcal{\mathcal{I}}} h_{i, k}(\mathbf{p}, t),
    \label{Eq:SafetyDetail}
\end{equation}
implying that $\mathcal{C}_k(t) \subset \text{int}(\cap_{i \in \mathcal{I}}\mathcal{C}_{i, k}(t))$. 
Also, as $\beta_k \rightarrow \infty$, the smooth approximation converges to the exact intersection:
\begin{equation}
    \lim_{\beta_k\rightarrow\infty} \mathcal{C}_k(t) = 
    \bigcap_{i \in \mathcal{I}}\mathcal{C}_{i, k}(t).
\end{equation}
Finally, we note that the gradient of $h_k$ can be expressed as
\begin{equation}
    \frac{\partial h_k(\mathbf{p}, t)}{\partial(\mathbf{p}, t)} = \sum_{i \in \mathcal{I}}
    \pi_i(\mathbf{p}, t)\frac{\partial h_{i, k}(\mathbf{p}, t)}{\partial(\mathbf{p}, t)},
\end{equation}
where each weight $\pi_i(\mathbf{p}, t)$ is defined by
\begin{equation}
    \pi_i(\mathbf{p}, t) = \exp(-\beta_k(h_{i, k}(\mathbf{p}, t) - h_k(\mathbf{p}, t)) + b).
\end{equation}
If $h_k$ is a valid CBF for \eqref{Eq:AgentDynamics1}, then it can be used to design a safe 
controller for the sampling interval $[t_k, t_{k+1})$.

\begin{proposition}[Valid CBF for \eqref{Eq:AgentDynamics1}]
    Let $h: \mathbb{R}^p\times[t_0, t_\text{f}) \rightarrow \mathbb{R}$ be a continuously 
    differentiable function with 
    $\frac{\partial}{\partial\mathbf{p}}h(\mathbf{p}, t) \neq \mathbf{0}$ when 
    $h(\mathbf{p}, t) = 0$. If when 
    $\frac{\partial}{\partial\mathbf{p}}h(\mathbf{p}, t) = \mathbf{0}$ and $h(\mathbf{p}, t) > 0$,
    \begin{equation}
        \frac{\partial h(\mathbf{p}, t)}{\partial t} \geq 0,
        \label{Eq:ValidCBFAssumption}
    \end{equation}
    then $h$ is a CBF for the system \eqref{Eq:AgentDynamics1}, for an arbitrary associated 
    extended class-$\mathcal{K}_\infty$ function $\alpha$.
\end{proposition}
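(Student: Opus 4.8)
The plan is to verify Definition \ref{Def:CBF} directly for the first-order system \eqref{Eq:AgentDynamics1}, whose state is $\mathbf{p}$ and whose input is $\mathbf{z}$. For this system the required inequality reads
\begin{equation*}
    \sup_{\mathbf{z} \in \mathbb{R}^z}\left[L_\mathbf{f}h(\mathbf{p}, t)
    + L_\mathbf{G}h(\mathbf{p}, t)\mathbf{z}
    + \frac{\partial h(\mathbf{p}, t)}{\partial t}\right] > -\alpha(h(\mathbf{p}, t)),
\end{equation*}
with $L_\mathbf{G}h(\mathbf{p}, t) = \frac{\partial}{\partial\mathbf{p}}h(\mathbf{p}, t)\mathbf{G}(\mathbf{p})$. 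The first key step is to observe that, since $\mathbf{G}(\mathbf{p})$ has full row rank, $\mathbf{G}(\mathbf{p})\mathbf{G}(\mathbf{p})^\top$ is invertible; multiplying $L_\mathbf{G}h$ on the right by $\mathbf{G}(\mathbf{p})^\top$ then shows that $L_\mathbf{G}h(\mathbf{p}, t) = \mathbf{0}$ holds if and only if $\frac{\partial}{\partial\mathbf{p}}h(\mathbf{p}, t) = \mathbf{0}$. This equivalence lets me split the analysis according to whether the spatial gradient vanishes, and to choose the witness flow $\mathcal{D}$ accordingly.

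In the case $\frac{\partial}{\partial\mathbf{p}}h(\mathbf{p}, t) \neq \mathbf{0}$, the equivalence above gives $L_\mathbf{G}h(\mathbf{p}, t) \neq \mathbf{0}$, so the term that is affine in $\mathbf{z}$ can be driven to $+\infty$; the supremum is then $+\infty$, and the inequality holds trivially for every extended class-$\mathcal{K}_\infty$ function $\alpha$. In the complementary case $\frac{\partial}{\partial\mathbf{p}}h(\mathbf{p}, t) = \mathbf{0}$, both Lie derivatives vanish, so $\dot{h}$ collapses to $\frac{\partial h(\mathbf{p}, t)}{\partial t}$, independent of $\mathbf{z}$. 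Here I would invoke the regularity hypothesis $\frac{\partial}{\partial\mathbf{p}}h \neq \mathbf{0}$ whenever $h = 0$ to rule out $h(\mathbf{p}, t) = 0$, leaving only $h(\mathbf{p}, t) > 0$ or $h(\mathbf{p}, t) < 0$. When $h(\mathbf{p}, t) > 0$, assumption \eqref{Eq:ValidCBFAssumption} gives $\frac{\partial h}{\partial t} \geq 0$, while $\alpha(h(\mathbf{p}, t)) > 0$ because $\alpha$ is strictly increasing with $\alpha(0) = 0$; hence $\dot{h} = \frac{\partial h}{\partial t} \geq 0 > -\alpha(h(\mathbf{p}, t))$, again for arbitrary $\alpha$.

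The main obstacle is the remaining region where $\frac{\partial}{\partial\mathbf{p}}h(\mathbf{p}, t) = \mathbf{0}$ and $h(\mathbf{p}, t) < 0$: there $\dot{h} = \frac{\partial h}{\partial t}$ is uncontrollable while $-\alpha(h(\mathbf{p}, t)) > 0$, so the inequality cannot be guaranteed. The resolution is that Definition \ref{Def:CBF} only demands the inequality on \emph{some} flow $\mathcal{D}$ containing the safe flow $\mathcal{C}$, so I would take $\mathcal{D}(t) = \{\mathbf{p} \in \mathbb{R}^p : h(\mathbf{p}, t) \geq 0\} \cup \{\mathbf{p} \in \mathbb{R}^p : \frac{\partial}{\partial\mathbf{p}}h(\mathbf{p}, t) \neq \mathbf{0}\}$, i.e.\ the complement of the problematic region. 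Since $\mathcal{C}(t) = \{\mathbf{p} : h(\mathbf{p}, t) \geq 0\} \subseteq \mathcal{D}(t)$, this flow is admissible, and by construction every $(\mathbf{p}, t) \in \mathcal{G}(\mathcal{D})$ falls into one of the two cases already handled. Collecting these establishes the CBF inequality on $\mathcal{G}(\mathcal{D})$ for any extended class-$\mathcal{K}_\infty$ function $\alpha$, completing the argument.
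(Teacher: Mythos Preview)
Your proof is correct and follows essentially the same case split as the paper's own argument: show the supremum is $+\infty$ whenever the spatial gradient is nonzero, and use assumption \eqref{Eq:ValidCBFAssumption} together with $\alpha(h)>0$ at the remaining points of $\mathcal{C}$. You are in fact slightly more careful than the paper, making explicit the role of the full-row-rank hypothesis on $\mathbf{G}(\mathbf{p})$ in the equivalence $L_\mathbf{G}h=\mathbf{0}\Leftrightarrow \frac{\partial h}{\partial\mathbf{p}}=\mathbf{0}$ and spelling out a concrete witness flow $\mathcal{D}$, whereas the paper simply verifies the inequality on $\mathcal{G}(\mathcal{C})$.
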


\begin{proof}
    Let $\mathcal{C}: [t_0, t_\text{f}) \rightrightarrows \mathbb{R}^p$ be the set-valued flow 
    associated with $h$, defined for all $t \in [t_0, t_\text{f})$ as
    \begin{equation}
        \mathcal{C}(t) = \{\mathbf{p} \in \mathbb{R}^p: h(\mathbf{p}, t) \geq 0\}.
    \end{equation}
    Given the differentiability and regularity assumptions, showing that $h$ is a CBF for 
    \eqref{Eq:AgentDynamics1} amounts to verifying the existence of a set-valued flow 
    $\mathcal{D}: [t_0, t_\text{f}) \rightrightarrows \mathbb{R}^p$ with 
    $\mathcal{C}(t) \subseteq \mathcal{D}(t)$ for all $t \in [t_0, t_\text{f})$ and an extended 
    class-$\mathcal{K}_\infty$ function $\alpha: \mathbb{R} \rightarrow \mathbb{R}$ so that, for 
    all $(\mathbf{p}, t) \in \mathcal{G}(\mathcal{D})$, we have
    \begin{equation}
        \sup_{\mathbf{z} \in \mathbb{R}^z} 
        \Dot{h}(\mathbf{p}, t, \mathbf{z}) > -\alpha(h(\mathbf{p}, t)).
        \label{Eq:DefinitionCBFPosition}
    \end{equation}

    We begin by noting that when 
    $\frac{\partial}{\partial\mathbf{p}}h(\mathbf{p}, t) \neq \mathbf{0}$, the value of $\Dot{h}$ 
    can be made arbitrarily large through input, meaning that
    \begin{equation}
        \sup_{\mathbf{z} \in \mathbb{R}^z} 
        \Dot{h}(\mathbf{p}, t, \mathbf{z}) = \infty > -\alpha(h(\mathbf{p}, t)),
    \end{equation}
    for any function $\alpha$. When 
    $\frac{\partial}{\partial\mathbf{p}}h(\mathbf{p}, t) = \mathbf{0}$ and $h(\mathbf{p}, t) > 0$,
    \begin{equation}
        \Dot{h}(\mathbf{p}, t, \mathbf{z}) = \frac{\partial h(\mathbf{p}, t)}{\partial t} \geq 0 
        > -\alpha(h(\mathbf{p}, t)),
    \end{equation}
    for any extended class-$\mathcal{K}_\infty$ function $\alpha$. Hence, for an arbitrary extended 
    class-$\mathcal{K}_\infty$ function $\alpha$, the inequality in 
    \eqref{Eq:DefinitionCBFPosition} holds for at least all 
    $(\mathbf{p}, t) \in \mathcal{G}(\mathcal{C})$, completing the proof. 
\end{proof}

\begin{remark}
    If the inequality in \eqref{Eq:ValidCBFAssumption} does not hold at all points where 
    $\frac{\partial}{\partial\mathbf{p}}h(\mathbf{p}, t) = \mathbf{0}$ and $h(\mathbf{p}, t) > 0$, 
    then we must find an extended class-$\mathcal{K}_\infty$ function $\alpha$ such that
    \begin{equation}
        \frac{\partial h(\mathbf{p}, t)}{\partial t} > -\alpha(h(\mathbf{p}, t))
    \end{equation}
    whenever 
    $\frac{\partial}{\partial\mathbf{p}}h(\mathbf{p}, t) = \mathbf{0}$, $h(\mathbf{p}, t) > 0$, 
    and $\frac{\partial}{\partial t} h(\mathbf{p}, t) < 0$ in order for $h$ to be a valid CBF for 
    \eqref{Eq:AgentDynamics1} under Definition \ref{Def:CBF}. In practice, however, a simpler 
    approach is to locally relax the CBF constraint around such points in the controller design.
\end{remark}

\newpage

When our agent has first-order dynamics as in \eqref{Eq:AgentDynamics1} and $h_k$ is a valid CBF 
for it on a set-valued flow $\mathcal{D}_k: [t_k, t_{k+1}) \rightrightarrows \mathbb{R}^p$, we can 
then construct a safety filter $\mathbf{k}_k: \mathcal{G}(\mathcal{D}_k) \rightarrow \mathbb{R}^z$ 
for the sampling interval $[t_k, t_{k+1})$ through the following QP:
\begin{equation} 
    \begin{aligned}
        \mathbf{k}_k(\mathbf{p}, t) = \underset{\mathbf{z} \in \mathbb{R}^z}{\arg\min}\,\,
        &\frac{1}{2}\|\mathbf{z} - \mathbf{k}_\text{d}(\mathbf{p}, t)\|^2\\
        \text{subject to}\,\, 
        &\Dot{h}_k(\mathbf{p}, t, \mathbf{z}) \geq -\alpha(h_k(\mathbf{p}, t)),
    \end{aligned}
    \label{Eq:SolutionSafetyFilter1}
\end{equation}
where $\mathbf{k}_\text{d}: \mathbb{R}^p\times \mathbb{R}_{\geq0} \rightarrow \mathbb{R}^z$ denotes 
the nominal controller for the agent. If the position of the agent at the instant $t_k$, denoted as 
$\mathbf{p}_k$, satisfies 
$\mathbf{p}_k \in \text{int}(\cap_{i \in \mathcal{I}}\mathcal{C}_{i, k}(t_k))$, then we can select 
the parameter $\beta_k$ so that $h_k(\mathbf{p}_k, t_k) \geq 0$, thereby ensuring safety over the 
time interval $[t_k, t_{k+1})$. For instance, given a nominal value 
$\bar{\beta} \in \mathbb{R}_{>0}$, the smoothing parameter $\beta_k$ can be chosen as
\begin{equation}
    \beta_k = \max\left\{\bar{\beta}, \beta_{\text{min}, k}+\epsilon\right\},
    \label{Eq:SmoothingParameter}
\end{equation}
where $\beta_{\text{min}, k}$ is the value of $\beta_k$ for which $h_k(\mathbf{p}_k, t_k) = 0$, and 
$\epsilon \in \mathbb{R}_{\geq0}$ is a buffer that ensures $h_k(\mathbf{p}_k, t_k) > 0$ when 
positive. Global safety for all times $t \in \mathbb{R}_{\geq 0}$ is then achieved if, at 
every sampling time $t_k$, it always holds that 
$\mathbf{p}_k \in \text{int}(\cap_{i \in \mathcal{I}}\mathcal{C}_{i, k}(t_k))$.

For agents with second-order dynamics as in \eqref{Eq:AgentDynamics2}, the state $\mathbf{z}$ 
cannot be directly controlled; thus, we must \textit{backstep} through $\mathbf{z}$ to design a CBF 
$h_{k, 1}: \mathbb{R}^p\times\mathbb{R}^z \times [t_k, t_{k+1}) \rightarrow \mathbb{R}$ for 
\eqref{Eq:AgentDynamics2}. The next subsection elaborates on this additional design step, and 
Algorithm \ref{Alg:Solution} summarizes the overall navigation scheme.

\begin{algorithm}[t]
    \caption{Safe Navigation Scheme}
    \label{Alg:Solution}
    \begin{algorithmic}[1]
        \Require agent, obstacles, and measurement models
        \For{$k \in \mathbb{N}$}
            \For{$i \in \mathcal{I}$}
                \State collect measurement $\mathbf{y}_{i, k}$
                \State compute state estimate $\hat{\mathcal{X}}_{i, k}$ 
                (Section \ref{Sec:Solution2})
                \State convert $\hat{\mathcal{O}}_{i, k}$ into a CBF $h_{i, k}$ for 
                \eqref{Eq:AgentDynamics1} (Section \ref{Sec:Solution3})
            \EndFor
            \State design overall CBF $h_k$ / $h_{k, 1}$ for 
            \eqref{Eq:AgentDynamics1} / \eqref{Eq:AgentDynamics2}
            \State apply safe controller during $[t_k, t_{k+1})$
        \EndFor
    \end{algorithmic}
\end{algorithm}


\subsection{CBF Backstepping} \label{Sec:Solution1A}

This subsection describes the additional backstepping-based design required for agents with 
second-order dynamics. To this end, we begin by extending the CBF backstepping result from 
\cite{taylor2022safe} to time-dependent CBFs.

\begin{proposition}[CBF Backstepping]
    Let $h: \mathbb{R}^p\times[t_0, t_\text{f}) \rightarrow \mathbb{R}$ be a CBF for 
    \eqref{Eq:AgentDynamics1} on a set-valued flow 
    $\mathcal{D}: [t_0, t_\text{f}) \rightrightarrows \mathbb{R}^p$, and let 
    $\mathbf{k}: \mathbb{R}^p\times [t_0, t_\text{f}) \rightarrow \mathbb{R}^z$ denote a 
    continuously differentiable feedback controller such that, for all 
    $(\mathbf{p}, t) \in \mathcal{G}(\mathcal{D})$,
    \begin{equation}
        \Dot{h}(\mathbf{p}, t, \mathbf{k}(\mathbf{p}, t)) > -\alpha(h(\mathbf{p}, t)),
    \end{equation}
    where $\alpha$ is an extended class-$\mathcal{K}_\infty$ function. Then, the function 
    $h_1: \mathbb{R}^p\times\mathbb{R}^z\times[t_0, t_\text{f}) \rightarrow \mathbb{R}$, 
    defined as
    \begin{equation}
        h_1(\mathbf{p}, \mathbf{z}, t) = h(\mathbf{p}, t) 
        - \frac{1}{2\sigma}\|\mathbf{z} - \mathbf{k}(\mathbf{p}, t)\|^2
    \end{equation}
    for all 
    $(\mathbf{p}, \mathbf{z}, t) \in \mathbb{R}^p\times\mathbb{R}^z\times[t_0, t_\text{f})$, 
    with $\sigma \in \mathbb{R}_{>0}$, is a CBF for the system \eqref{Eq:AgentDynamics2}, for an 
    associated extended class-$\mathcal{K}_\infty$ function $\alpha_1$ such that 
    $\alpha_1(s) \geq \alpha(s)$ for all $s \in \mathbb{R}$.
\end{proposition}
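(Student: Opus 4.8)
The plan is to verify the CBF conditions of Definition \ref{Def:CBF} for $h_1$ on the extended state $(\mathbf{p}, \mathbf{z})$, splitting the analysis according to whether the backstepping error $\mathbf{e} := \mathbf{z} - \mathbf{k}(\mathbf{p}, t)$ vanishes. First I would introduce the candidate safe flow $\mathcal{C}_1(t) = \{(\mathbf{p}, \mathbf{z}) : h_1(\mathbf{p}, \mathbf{z}, t) \geq 0\}$ together with the extended outer flow $\mathcal{D}_1(t) = \mathcal{D}(t) \times \mathbb{R}^z$, and check the containment $\mathcal{C}_1(t) \subseteq \mathcal{D}_1(t)$: if $h_1 \geq 0$ then $h(\mathbf{p}, t) \geq \frac{1}{2\sigma}\|\mathbf{e}\|^2 \geq 0$, so $\mathbf{p} \in \mathcal{C}(t) \subseteq \mathcal{D}(t)$. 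This containment is precisely what makes the first-order hypothesis $\Dot{h}(\mathbf{p}, t, \mathbf{k}(\mathbf{p}, t)) > -\alpha(h(\mathbf{p}, t))$ available at every $(\mathbf{p}, \mathbf{z}, t) \in \mathcal{G}(\mathcal{D}_1)$, since any such point satisfies $(\mathbf{p}, t) \in \mathcal{G}(\mathcal{D})$. I would then dispatch the regularity condition on $\{h_1 = 0\}$: since $\frac{\partial}{\partial\mathbf{z}}h_1 = -\frac{1}{\sigma}\mathbf{e}^\top$, the gradient is automatically nonzero whenever $\mathbf{e} \neq \mathbf{0}$; and when $\mathbf{e} = \mathbf{0}$ the relation $h_1 = 0$ forces $h(\mathbf{p}, t) = 0$, so $\frac{\partial}{\partial\mathbf{p}}h_1 = \frac{\partial}{\partial\mathbf{p}}h \neq \mathbf{0}$ by the regularity built into the CBF $h$.

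Next I would compute the coefficient of the input in $\Dot{h}_1$. Writing \eqref{Eq:AgentDynamics2} in control-affine form on $(\mathbf{p}, \mathbf{z})$, the only input-bearing block is $\Dot{\mathbf{z}} = \mathbf{f}_1 + \mathbf{G}_1\mathbf{u}$, so with $\frac{\partial}{\partial\mathbf{z}}h_1 = -\frac{1}{\sigma}\mathbf{e}^\top$ the relevant Lie derivative is $-\frac{1}{\sigma}\mathbf{e}^\top\mathbf{G}_1(\mathbf{p}, \mathbf{z})$. Because $\mathbf{G}_1$ has full row rank everywhere, $\mathbf{G}_1^\top\mathbf{e} = \mathbf{0}$ if and only if $\mathbf{e} = \mathbf{0}$, so this covector is nonzero exactly when $\mathbf{e} \neq \mathbf{0}$. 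Hence, for every $(\mathbf{p}, \mathbf{z}, t) \in \mathcal{G}(\mathcal{D}_1)$ with $\mathbf{e} \neq \mathbf{0}$, the input can drive $\Dot{h}_1$ arbitrarily large, giving $\sup_{\mathbf{u}} \Dot{h}_1 = \infty > -\alpha_1(h_1)$ for any extended class-$\mathcal{K}_\infty$ function $\alpha_1$, and the CBF inequality holds trivially.

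The remaining, decisive case is $\mathbf{e} = \mathbf{0}$, i.e. $\mathbf{z} = \mathbf{k}(\mathbf{p}, t)$, where the input coefficient vanishes and the inequality must be secured without actuation. This is exactly where the backstepping construction pays off: substituting $\mathbf{z} = \mathbf{k}(\mathbf{p}, t)$ annihilates every term of $\Dot{h}_1$ carrying the factor $\mathbf{e}^\top$ and replaces $\mathbf{z}$ by $\mathbf{k}$, leaving $\Dot{h}_1 = \frac{\partial}{\partial\mathbf{p}}h\,(\mathbf{f} + \mathbf{G}\mathbf{k}) + \frac{\partial}{\partial t}h = \Dot{h}(\mathbf{p}, t, \mathbf{k}(\mathbf{p}, t))$, while simultaneously $h_1 = h(\mathbf{p}, t)$. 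The hypothesis then yields $\Dot{h}_1 = \Dot{h}(\mathbf{p}, t, \mathbf{k}) > -\alpha(h(\mathbf{p}, t)) = -\alpha(h_1)$, and choosing any extended class-$\mathcal{K}_\infty$ function $\alpha_1$ with $\alpha_1 \geq \alpha$ pointwise gives $-\alpha(h_1) \geq -\alpha_1(h_1)$, hence $\Dot{h}_1 > -\alpha_1(h_1)$. Combining both cases establishes \eqref{Eq:DefinitionCBF} for $h_1$ on $\mathcal{G}(\mathcal{D}_1)$ and completes the argument.

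I would flag two points as the crux. The conceptual one is recognizing that the locus where the input drops out of $\Dot{h}_1$ is precisely $\{\mathbf{z} = \mathbf{k}(\mathbf{p}, t)\}$, on which $h_1$ collapses to $h$ and the quadratic error term contributes nothing, so the reduced inequality is inherited directly from the first-order hypothesis. The technical one is the need for $\alpha_1 \geq \alpha$ to hold on all of $\mathbb{R}$ rather than only on $\mathbb{R}_{\geq 0}$: because $\mathcal{D}$ may strictly contain $\mathcal{C}$, the point reached at $\mathbf{e} = \mathbf{0}$ can have $h(\mathbf{p}, t) < 0$, so the comparison between $\alpha_1$ and $\alpha$ must remain valid at negative arguments, which is exactly what the extended class-$\mathcal{K}_\infty$ setting affords. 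The matrix-rank and differentiation steps are otherwise routine.
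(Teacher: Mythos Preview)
Your proposal is correct and follows essentially the same approach as the paper: define $\mathcal{D}_1(t) = \mathcal{D}(t)\times\mathbb{R}^z$, split on whether $\mathbf{e} = \mathbf{z} - \mathbf{k}(\mathbf{p},t)$ vanishes, use the full row rank of $\mathbf{G}_1$ to get $\sup_{\mathbf{u}}\Dot h_1 = \infty$ when $\mathbf{e}\neq\mathbf{0}$, and collapse to the first-order hypothesis when $\mathbf{e}=\mathbf{0}$. Your treatment is in fact more explicit than the paper's on the containment $\mathcal{C}_1(t)\subseteq\mathcal{D}_1(t)$, on the regularity condition at $h_1=0$, and on why $\alpha_1\geq\alpha$ must hold on all of $\mathbb{R}$; the paper dispatches these as ``straightforward.''
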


\newpage

\begin{proof}
    Let $\mathcal{C}_1: [t_0, t_\text{f}) \rightrightarrows \mathbb{R}^p\times\mathbb{R}^z$ denote 
    the set-valued flow associated with $h_1$, defined for all $t \in [t_0, t_\text{f})$ as
    \begin{equation}
        \mathcal{C}_1(t) = \{(\mathbf{p}, \mathbf{z}) \in \mathbb{R}^p\times\mathbb{R}^z: 
        h_1(\mathbf{p}, \mathbf{z}, t) \geq 0\}.
    \end{equation}
    To show that $h_1$ is a CBF for the system \eqref{Eq:AgentDynamics2}, we have to verify that it 
    satisfies the following properties: (i) $h_1$ is continuously differentiable, (ii) 
    $\frac{\partial}{\partial(\mathbf{p}, \mathbf{z})} h_1(\mathbf{p}, \mathbf{z}, t) 
    \neq \mathbf{0}$ when 
    $h_1(\mathbf{p}, \mathbf{z}, t) = 0$, and (iii) there exists a set-valued flow 
    $\mathcal{D}_1: [t_0, t_\text{f}) \rightrightarrows \mathbb{R}^p\times\mathbb{R}^z$ with 
    $\mathcal{C}_1(t) \subseteq \mathcal{D}_1(t)$ for all $t \in [t_0, t_\text{f})$ and an extended 
    class-$\mathcal{K}_\infty$ function $\alpha_1$ such that, for all 
    $(\mathbf{p}, \mathbf{z}, t) \in \mathcal{G}(\mathcal{D}_1)$,
    \begin{equation}
        \sup_{\mathbf{u} \in \mathbb{R}^m}
        \Dot{h}_1(\mathbf{p}, \mathbf{z}, t, \mathbf{u}) 
        > -\alpha_1(h_1(\mathbf{p}, \mathbf{z}, t)).
        \label{Eq:DefinitionCBFBS}
    \end{equation}

    The properties (i) and (ii) are straightforward to verify since $\mathbf{k}$ is continuously 
    differentiable and $h$ is a CBF for the system \eqref{Eq:AgentDynamics1}. To show (iii), we 
    note that for $\mathbf{z} \neq \mathbf{k}(\mathbf{p}, t)$, the value of $\Dot{h}_1$ can be made 
    arbitrarily large through input, meaning that
    \begin{equation}
        \sup_{\mathbf{u} \in \mathbb{R}^m}
        \Dot{h}_1(\mathbf{p}, \mathbf{z}, t, \mathbf{u}) = \infty 
        > -\alpha(h_1(\mathbf{p}, \mathbf{z}, t)),
    \end{equation}
    and for $\mathbf{z} = \mathbf{k}(\mathbf{p}, t)$ and 
    $(\mathbf{p}, t) \in \mathcal{G}(\mathcal{D})$, we have
    \begin{equation}
        \begin{aligned}
            \Dot{h}_1(\mathbf{p}, \mathbf{k}(\mathbf{p}, t), t, \mathbf{u}) 
            &> -\alpha(h(\mathbf{p}, t))\\
            &= -\alpha(h_1(\mathbf{p}, \mathbf{k}(\mathbf{p}, t), t)).
        \end{aligned}
    \end{equation}
    Hence, the condition in \eqref{Eq:DefinitionCBFBS} holds for all 
    $(\mathbf{p}, t, \mathbf{z}) \in \mathcal{G}(\mathcal{D})\times\mathbb{R}^z$ and 
    an extended class-$\mathcal{K}_\infty$ function $\alpha_1$ with
    $\alpha_1(s) \geq \alpha(s)$ for all $s \in \mathbb{R}$, which concludes the proof. 
\end{proof}

\begin{remark}
    The CBF $h_1$ can be employed to render the set-valued flow $\mathcal{C}_1$ forward invariant 
    for the system \eqref{Eq:AgentDynamics2}. Therefore, to render the set-valued flow 
    $\mathcal{C}: [t_0, t_\text{f}) \rightrightarrows \mathbb{R}^p$ associated with $h$ forward 
    invariant for the top-level subsystem of \eqref{Eq:AgentDynamics2}, the initial state 
    $(\mathbf{p}_0, \mathbf{z}_0)$ must satisfy $h_1(\mathbf{p}_0, \mathbf{z}_0, t_0) \geq 0$. If 
    the initial position lies in the interior of $\mathcal{C}(t_0)$, i.e.,
    $h(\mathbf{p}_0, t_0) > 0$, then this requirement can be satisfied by selecting
    \begin{equation}
        \sigma \geq \frac{1}{2h(\mathbf{p}_0, t_0)}
        \|\mathbf{z}_0 - \mathbf{k}(\mathbf{p}_0, t_0)\|^2.
    \end{equation}
\end{remark}

Building on the previous result, we can then design a CBF 
$h_{k, 1}: \mathbb{R}^p\times\mathbb{R}^z\times[t_k, t_{k+1}) \rightarrow \mathbb{R}$ for 
\eqref{Eq:AgentDynamics2} as
\begin{equation}
    h_{k, 1}(\mathbf{p}, \mathbf{z}, t) = h_k(\mathbf{p}, t) 
    - \frac{1}{2\sigma_k}\|\mathbf{z} - \mathbf{k}_k(\mathbf{p}, t)\|^2,
\end{equation}
where $\sigma_k \in \mathbb{R}_{>0}$. It is important to note, however, that in this formulation 
the top-level controller $\mathbf{k}_k$ must be continuously differentiable. Hence, $\mathbf{k}_k$ 
cannot be designed by means of a QP as in \eqref{Eq:SolutionSafetyFilter1} because QPs typically 
only guarantee local Lipschitz continuity. To ensure smoothness, we adopt the technique from 
\cite{ong2019universal}, which constructs a smooth controller based on Gaussian-weighted centroids. 
Specifically, consider the set
\begin{equation}
    K_{h_k}(\mathbf{p}, t) = \big\{\mathbf{z} \in \mathbb{R}^z: 
    \Dot{h}_k(\mathbf{p}, t, \mathbf{z}) \geq - \alpha(h_k(\mathbf{p}, t))\big\}.
\end{equation}
Following \cite{ong2019universal}, we can design a top-level controller $\mathbf{k}_k$ as
\begin{equation} 
    \mathbf{k}_k(\mathbf{p}, t) = \bm{\mu}(K_{h_k}(\mathbf{p}, t)),
\end{equation}
where $\bm{\mu}$ is the Gaussian-weighted centroid function, given by
\begin{equation}
    \bm{\mu}(\mathcal{S}) = \frac{\int_\mathcal{S}\mathbf{z}
    \exp(-\|\mathbf{z}\|^2/(2\varsigma))d\mathbf{z}}
    {\int_\mathcal{S}\exp(-\|\mathbf{z}\|^2/(2\varsigma))d\mathbf{z}},
\end{equation}
for $\mathcal{S} \subseteq \mathbb{R}^z$ and $\varsigma \in \mathbb{R}_{>0}$. This controller can 
be expressed in closed-form, and it is smooth if the agent dynamics, the CBF gradient field, and 
the function $\alpha$ are smooth \cite{ong2019universal}.

\newpage

When our agent has second-order dynamics as in \eqref{Eq:AgentDynamics2} and $h_{k, 1}$ is a valid 
CBF for it on 
$\mathcal{D}_{k, 1}: [t_k, t_{k+1}) \rightrightarrows \mathbb{R}^p\times\mathbb{R}^z$, we can then 
construct a safe controller 
$\mathbf{k}_{k, 1}: \mathcal{G}(\mathcal{D}_{k, 1}) \rightarrow \mathbb{R}^m$ for the sampling 
interval $[t_k, t_{k+1})$ through the following QP:
\begin{equation} 
    \begin{aligned}
        \mathbf{k}_{k, 1}(\mathbf{p}, \mathbf{z}, t) =\,\, 
        &\underset{\mathbf{u} \in \mathbb{R}^m}{\arg\min}\,\,
        \frac{1}{2}\|\mathbf{u} - \mathbf{k}_{\text{d}, 1}(\mathbf{p}, \mathbf{z}, t)\|^2\\
        \text{subject to}\,\, 
        &\Dot{h}_{k, 1}(\mathbf{p}, \mathbf{z}, t, \mathbf{u}) 
        \geq -\alpha_1(h_{k, 1}(\mathbf{p}, \mathbf{z},t)),
    \end{aligned}
    \label{Eq:SolutionSafetyFilter2}
\end{equation}
where $\mathbf{k}_{\text{d}, 1}: \mathbb{R}^p\times\mathbb{R}^z\times\mathbb{R}_{\geq0} 
\rightarrow \mathbb{R}^m$ is the nominal controller for the agent. If the position of the agent at 
the time instant $t_k$ satisfies 
$\mathbf{p}_k \in \text{int}(\cap_{i \in \mathcal{I}}\mathcal{C}_{i, k}(t_k))$, then selecting a 
positive $\epsilon$ in \eqref{Eq:SmoothingParameter} guarantees $h_k(\mathbf{p}_k, t_k) > 0$. 
As a result, we can then select the gain $\sigma_k$ such that 
$h_{k, 1}(\mathbf{p}_k, \mathbf{z}_k, t_k) \geq 0$, thereby ensuring safety over the sampling 
interval $[t_k, t_{k+1})$. For instance, given a nominal value $\bar{\sigma} \in \mathbb{R}_{>0}$, 
the gain $\sigma_k$ can be selected as
\begin{equation}
    \sigma_k = \max\left\{\bar{\sigma}, \frac{1}{2h_k(\mathbf{p}_k, t_k)}
    \|\mathbf{z}_k - \mathbf{k}_k(\mathbf{p}_k, t_k)\|^2\right\}.
\end{equation}
Global safety for all times $t \in \mathbb{R}_{\geq 0}$ is then achieved if, at 
every sampling time $t_k$, it always holds that 
$\mathbf{p}_k \in \text{int}(\cap_{i \in \mathcal{I}}\mathcal{C}_{i, k}(t_k))$, as in the 
first-order case (see Remark \ref{Rm:GlobalSafety}, Section \ref{Sec:Solution3}).


\section{Guaranteed Obstacle-State Estimation using Constrained Convex Generators} 
\label{Sec:Solution2}

This section details the first main component of our solution, which is the estimation algorithm 
used for obtaining guaranteed estimates of the obstacle states. For notational simplicity, in this 
section we omit the obstacle index $i$ from \eqref{Eq:ObstacleDynamics}-\eqref{Eq:MeasurementModel} 
and consider a generic linear model of the form
\begin{equation}
    \Dot{\mathbf{x}} = \mathbf{F}\mathbf{x} + \mathbf{w},
    \label{Eq:ObstacleDynamicsGeneric}
\end{equation}
with $\mathbf{x}, \mathbf{w} \in \mathbb{R}^n$, $\mathbf{F} \in \mathbb{R}^{n\times n}$, and the 
measurement model is
\begin{equation}
    \mathbf{y}_k = \mathbf{C}\mathbf{x}_k + \mathbf{v}_k,
    \label{Eq:MeasurementModelGeneric}
\end{equation}
with $\mathbf{C} \in \mathbb{R}^{y\times n}$ and $\mathbf{v}_k \in \mathbb{R}^{y}$. The subproblem 
addressed in this section can then be summarized as follows.

\begin{subproblem}[Guaranteed Obstacle-State Estimation]
    Consider the linear system \eqref{Eq:ObstacleDynamicsGeneric} and the measurement model 
    \eqref{Eq:MeasurementModelGeneric}. Then, design an estimation algorithm that, at the sampling 
    time $t_k$, computes a state estimate $\hat{\mathcal{X}}_k$ such that 
    $\mathbf{x}_k \in \hat{\mathcal{X}}_k$, given the collected measurements and the compact convex 
    sets $\mathcal{X}_0$, $\mathcal{W}$, and $\mathcal{V}$, such that 
    $\mathbf{x}_0 \in \mathcal{X}_0$, $\mathbf{w} \in \mathcal{W}$, and 
    $\mathbf{v}_k \in \mathcal{V}$.
\end{subproblem}

\subsection{Constrained Convex Generators}

Prior to presenting the main results of this section, we first review the CCG class of sets, which 
constitutes the state-of-the-art framework for guaranteed state estimation.

\begin{definition}[CCG \cite{silvestre2021constrained}]
    A set $\mathcal{Z} \subset \mathbb{R}^n$ is a CCG if there exists a tuple 
    $(\mathbf{G}, \mathbf{c}, \mathbf{A}, \mathbf{b}) \in 
    \mathbb{R}^{n\times\xi}\times\mathbb{R}^n\times\mathbb{R}^{c\times\xi}\times\mathbb{R}^{c}$ 
    and a set $\mathfrak{G} = \mathcal{G}_1\times\mathcal{G}_2\times\dots\times\mathcal{G}_G 
    \subset \mathbb{R}^\xi$ such that
    \begin{equation}
        \mathcal{Z} = \{\mathbf{G}\bm{\xi} + \mathbf{c}: 
        \mathbf{A}\bm{\xi} = \mathbf{b}, \bm{\xi} \in \mathfrak{G}\},
        \label{Eq:DefinitionCCG}
    \end{equation}
    where, for each $j \in \{1, \dots, G\}$, the generator set $\mathcal{G}_j$ is defined as the 
    zero-sublevel set of a convex function $g_j: \mathbb{R}^{\xi_j} \rightarrow \mathbb{R}$:
    \begin{equation}
        \mathcal{G}_j = \left\{\bm{\xi}_j \in \mathbb{R}^{\xi_j}: g_j(\bm{\xi}_j) \leq 0\right\}.
        \label{Eq:CCGGeneratorSet}
    \end{equation}
\end{definition}

\vspace{1.5mm}

For a CCG $\mathcal{Z}$ defined as in \eqref{Eq:DefinitionCCG}, we introduce the shorthand 
notation \hfill 
$\mathcal{Z} = (\mathbf{G}, \mathbf{c}, \mathbf{A}, \mathbf{b}, \mathfrak{G}) \subset 
\mathbb{R}^n$. \hfill Given \hfill this \hfill formulation,

\newpage

\noindent the following proposition asserts that CCGs are closed under three fundamental set 
operations (affine map, Minkowski sum, and generalized intersection) and that these operations can 
be performed in closed form through simple identities that follow almost directly from the 
definition of CCG.

\setlength{\arraycolsep}{1pt}
\begin{proposition}[Set Operations with CCGs \cite{silvestre2021constrained}]
    Consider a matrix $\mathbf{R} \in \mathbb{R}^{y\times n}$, a vector 
    $\mathbf{t} \in \mathbb{R}^y$, and three CCGs:
    \begin{itemize}
        \item $\mathcal{Z} = 
        (\mathbf{G}_z, \mathbf{c}_z, \mathbf{A}_z, \mathbf{b}_z, \mathfrak{G}_z)
        \subset \mathbb{R}^n$;
        \item $\mathcal{W} = 
        (\mathbf{G}_w, \mathbf{c}_w, \mathbf{A}_w, \mathbf{b}_w, \mathfrak{G}_w)
        \subset \mathbb{R}^n$;
        \item $\mathcal{V} = 
        (\mathbf{G}_v, \mathbf{c}_v, \mathbf{A}_v, \mathbf{b}_v, \mathfrak{G}_v)
        \subset \mathbb{R}^y$.
    \end{itemize}
    Then, regarding the affine map, Minkowski sum, and generalized intersection operations, the 
    following identities hold:
    \begin{equation}
        \begin{aligned}
            \mathbf{R}\mathcal{Z} + \mathbf{t} &= (\mathbf{R}\mathbf{G}_z, 
            \mathbf{R}\mathbf{c}_z + \mathbf{t}, \mathbf{A}_z, \mathbf{b}_z, \mathfrak{G}_z),\\
            \mathcal{Z} \oplus \mathcal{W} &= 
            \scalebox{0.85}{$\left(
            \begin{bmatrix}
                \mathbf{G}_z & \mathbf{G}_w
            \end{bmatrix}\hspace{-1mm},
            \mathbf{c}_z + \mathbf{c}_w, 
            \begin{bmatrix}
                \mathbf{A}_z & \mathbf{0}\\
                \mathbf{0} & \mathbf{A}_w
            \end{bmatrix}\hspace{-1mm},
            \begin{bmatrix}
                \mathbf{b}_z\\
                \mathbf{b}_w
            \end{bmatrix}\hspace{-1mm},
            \mathfrak{G}_z\times\mathfrak{G}_w\right)$},\\
            \mathcal{Z} \cap_\mathbf{R} \mathcal{V} &= 
            \scalebox{0.79}{$\left(
            \begin{bmatrix}
                \mathbf{G}_z & \mathbf{0}
            \end{bmatrix}\hspace{-1mm},
            \mathbf{c}_z, 
            \begin{bmatrix}
                \mathbf{A}_z & \mathbf{0}\\
                \mathbf{0} & \mathbf{A}_v\\
                \mathbf{R}\mathbf{G}_z & -\mathbf{G}_v
            \end{bmatrix}\hspace{-1mm},
            \begin{bmatrix}
                \mathbf{b}_z\\
                \mathbf{b}_v\\
                \mathbf{c}_v - \mathbf{R}\mathbf{c}_z
            \end{bmatrix}\hspace{-1mm},
            \mathfrak{G}_z\times\mathfrak{G}_v\right)$}.
        \end{aligned}
        \label{Eq:SetOperationsCCGs}
    \end{equation}
\end{proposition}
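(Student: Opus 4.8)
The plan is to establish each of the three identities in \eqref{Eq:SetOperationsCCGs} by a direct double-inclusion argument, in each case unfolding the defining representation \eqref{Eq:DefinitionCCG} and exhibiting the generator variable that realizes membership. Since every identity ultimately reduces to a bookkeeping of generator matrices, centers, and constraints, no deep machinery is required; the entire content lies in correctly encoding each operation into the CCG tuple $(\mathbf{G}, \mathbf{c}, \mathbf{A}, \mathbf{b}, \mathfrak{G})$.

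For the affine map, I would take any $\mathbf{z} \in \mathcal{Z}$, write $\mathbf{z} = \mathbf{G}_z\bm{\xi} + \mathbf{c}_z$ with $\mathbf{A}_z\bm{\xi} = \mathbf{b}_z$ and $\bm{\xi} \in \mathfrak{G}_z$, and observe that $\mathbf{R}\mathbf{z} + \mathbf{t} = (\mathbf{R}\mathbf{G}_z)\bm{\xi} + (\mathbf{R}\mathbf{c}_z + \mathbf{t})$, while the constraint $\mathbf{A}_z\bm{\xi} = \mathbf{b}_z$ and the generator-set membership are left untouched. Conversely, any point admitting such a representation lies in $\mathbf{R}\mathcal{Z} + \mathbf{t}$, so the two sets coincide, which is precisely the first identity. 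For the Minkowski sum, I would introduce the stacked generator variable $\bm{\xi} = (\bm{\xi}_z, \bm{\xi}_w)$: a point belongs to $\mathcal{Z} \oplus \mathcal{W}$ iff it equals $(\mathbf{G}_z\bm{\xi}_z + \mathbf{c}_z) + (\mathbf{G}_w\bm{\xi}_w + \mathbf{c}_w)$ for some $\bm{\xi}_z, \bm{\xi}_w$ meeting their respective constraints and memberships. Collecting terms yields the value $[\mathbf{G}_z\ \mathbf{G}_w]\bm{\xi} + (\mathbf{c}_z + \mathbf{c}_w)$, the two constraint systems merge into the single block-diagonal system displayed, and the independence of $\bm{\xi}_z$ and $\bm{\xi}_w$ is captured exactly by the product set $\mathfrak{G}_z \times \mathfrak{G}_w$. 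Both inclusions are then immediate, giving the second identity.

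The generalized intersection is the step requiring the most care, and I would treat it as the main obstacle. Recalling that $\mathcal{Z} \cap_\mathbf{R} \mathcal{V} = \{\mathbf{z} \in \mathcal{Z}: \mathbf{R}\mathbf{z} \in \mathcal{V}\}$, a point $\mathbf{z}$ lies in this set iff $\mathbf{z} = \mathbf{G}_z\bm{\xi}_z + \mathbf{c}_z$ with $\mathbf{A}_z\bm{\xi}_z = \mathbf{b}_z$ and $\bm{\xi}_z \in \mathfrak{G}_z$, and simultaneously there exists $\bm{\xi}_v \in \mathfrak{G}_v$ with $\mathbf{A}_v\bm{\xi}_v = \mathbf{b}_v$ such that $\mathbf{R}\mathbf{z} = \mathbf{G}_v\bm{\xi}_v + \mathbf{c}_v$. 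Substituting the expression for $\mathbf{z}$ into this last equality turns it into the coupling constraint
\begin{equation}
    \mathbf{R}\mathbf{G}_z\bm{\xi}_z - \mathbf{G}_v\bm{\xi}_v = \mathbf{c}_v - \mathbf{R}\mathbf{c}_z,
\end{equation}
which is exactly the third block row of the displayed constraint matrix. Working with the stacked variable $(\bm{\xi}_z, \bm{\xi}_v) \in \mathfrak{G}_z \times \mathfrak{G}_v$, the value of $\mathbf{z}$ depends only on $\bm{\xi}_z$, so the generator matrix is $[\mathbf{G}_z\ \mathbf{0}]$ with center $\mathbf{c}_z$, while the three stacked rows enforce, respectively, membership of $\mathbf{z}$ in $\mathcal{Z}$, the existence of a feasible $\bm{\xi}_v$, and the coupling $\mathbf{R}\mathbf{z} = \mathbf{G}_v\bm{\xi}_v + \mathbf{c}_v$. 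The point I would verify most carefully is the handling of the quantifier on $\bm{\xi}_v$: a point lies in the intersection precisely when some feasible $\bm{\xi}_v$ exists, and this existential quantifier is exactly what the shared generator-variable representation of a CCG encodes, since $\bm{\xi}_v$ is a free internal variable eliminated by projection onto the output space. Checking both inclusions then confirms the third identity and completes the proof.
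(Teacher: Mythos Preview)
Your proposal is correct. The paper does not actually supply a proof of this proposition; it cites \cite{silvestre2021constrained} and merely remarks that the identities ``follow almost directly from the definition of CCG.'' Your double-inclusion argument, which unfolds the representation \eqref{Eq:DefinitionCCG} and tracks the generator variable through each operation, is precisely the direct verification the paper alludes to, so there is nothing to compare.
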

\setlength{\arraycolsep}{2pt}
\vspace{1mm}

From the identities in \eqref{Eq:SetOperationsCCGs} and the fact that CCGs are convex by 
construction, it follows that CCGs are well suited for state estimation in linear systems. 
Moreover, CCGs constitute a very general class of sets, which significantly reduces the need for 
approximations. Particularly, they generalize many commonly used set classes, such as intervals, 
ellipsoids, zonotopes, CZs or polytopes, convex cones, ellipsotopes, or AH-polytopes. For 
additional details on CCGs, the reader is referred to \cite{silvestre2021constrained}.

Consequently, for the linear system \eqref{Eq:ObstacleDynamicsGeneric} and the observation model 
\eqref{Eq:MeasurementModelGeneric}, if the sets $\mathcal{X}_0$, $\mathcal{W}$, and $\mathcal{V}$ 
are CCGs, the guaranteed state estimation problem can be solved recursively by applying the 
operation identities in \eqref{Eq:SetOperationsCCGs}. Specifically, at each sampling instant $t_k$, 
the optimal state estimate is given by the recursion
\begin{equation}
    \hat{\mathcal{X}}_k = \left(\mathbf{\Phi}(T_s)\hat{\mathcal{X}}_{k-1} 
    \oplus \mathbf{\Gamma}(T_s)\tilde{\mathcal{W}}\right) 
    \cap_\mathbf{C} (\mathbf{y}_{k} - \mathcal{V}),
    \label{Eq:OptimalStateEstimate}
\end{equation}
with $\mathbf{\Phi}$, $\mathbf{\Gamma}$, and $\tilde{\mathcal{W}}$ defined as in Section 
\ref{Sec:Solution1}, and with an initial state estimate determined as
\begin{equation}
    \hat{\mathcal{X}}_0 = \mathcal{X}_0 \cap_\mathbf{C} (\mathbf{y}_{0} - \mathcal{V}).
\end{equation}
The recursive formula from \eqref{Eq:OptimalStateEstimate} produces a new state estimate by first 
propagating the previous estimate over one sampling period and then updating the propagated set by 
intersecting it with the set of states consistent with the current measurement.

However, this approach has a major drawback. Since each update of the state estimate involves a 
Minkowski sum and a generalized intersection, the number of generator sets steadily increases over 
time, leading to a substantial computational load after a certain number of iterations. To address 
this limitation, the following subsection introduces a more efficient approach that preserves a 
fixed-length representation.

\subsection{Explicit Finite-Horizon Estimator}

To keep a constant computational load over time, we adopt a finite-horizon estimation strategy that 
generalizes the approach recently introduced in \cite{rego2024explicit}. The main point is to fix a 
horizon length $N \in \mathbb{N}$ and consider an auxiliary conservative estimator which, at each 
sampling time $t_k \geq NT_s$, provides a CCG state estimate $\bar{\mathcal{X}}_{k-N}$ 
corresponding to the earlier instant $t_{k-N}$. The current estimate $\hat{\mathcal{X}}_k$ is then 
obtained by improving the estimate 

\newpage

\noindent $\bar{\mathcal{X}}_{k-N} \cap_\mathbf{C} (\mathbf{y}_{k-N} - \mathcal{V})$ through $N$ 
iterations of \eqref{Eq:OptimalStateEstimate}.

However, note that a straightforward implementation of this approach would still require executing 
$N$ recursive iterations at each sampling instant $t_k \geq NT_s$, which would be computationally 
expensive for large $N$. The next result addresses this issue by showing that the CCG parameters of 
the current state estimate $\hat{\mathcal{X}}_k$ can be explicitly determined as a function of the 
parameters of the conservative estimate $\bar{\mathcal{X}}_{k-N}$, based on fixed auxiliary 
variables that can be precomputed offline.

\begin{theorem}[Explicit Computation of CCG Estimates]
    Consider the linear system \eqref{Eq:ObstacleDynamicsGeneric}, the measurement model 
    \eqref{Eq:MeasurementModelGeneric}, and let the sets $\tilde{\mathcal{W}}$ and $\mathcal{V}$ be 
    CCGs, defined as
    \begin{equation}
        \begin{aligned}
            \tilde{\mathcal{W}} &= (\mathbf{G}_{\tilde{w}}, \mathbf{c}_{\tilde{w}},
            [\,\,], [\,\,], \mathfrak{G}_{\tilde{w}}),\\
            \mathcal{V} &= (\mathbf{G}_v, \mathbf{c}_v, [\,\,], [\,\,], \mathfrak{G}_v).
        \end{aligned}
        \label{Eq:FiniteHorizonSets}
    \end{equation}
    Additionally, consider a fixed horizon $N \in \mathbb{N}$, and let
    \begin{equation}
        \bar{\mathcal{X}}_{k-N} = (\mathbf{G}_{\bar{x}, k-N}, \mathbf{c}_{\bar{x}, k-N}, 
        [\,\,], [\,\,], \mathfrak{G}_{\bar{x}, k-N})
        \label{Eq:ConservativeEstimate}
    \end{equation}
    be a state estimate corresponding to the sampling instant $t_{k-N}$. Then, the state estimate 
    at the sampling instant $t_k$, attained by improving 
    $\bar{\mathcal{X}}_{k-N} \cap_\mathbf{C} (\mathbf{y}_{k-N} - \mathcal{V})$ through $N$ 
    iterations of \eqref{Eq:OptimalStateEstimate}, is a CCG of the form
    \begin{equation}
        \hat{\mathcal{X}}_k = (\mathbf{G}_{\hat{x}, k}, \mathbf{c}_{\hat{x}, k}, 
        \mathbf{A}_{\hat{x}, k}, \mathbf{b}_{\hat{x}, k}, \mathfrak{G}_{\hat{x}, k}),
        \label{Eq:EstimatorEstimate}
    \end{equation}
    with parameters explicitly given by
    \begin{equation}
        \begin{aligned}
            \mathbf{G}_{\hat{x}, k} &=
            \begin{bmatrix}
                \mathbf{R}_{1, N}\mathbf{G}_{\bar{x}, k-N} & \mathbf{R}_{2, N}
            \end{bmatrix},\\
            \mathbf{c}_{\hat{x}, k} &= 
            \mathbf{R}_{1, N}\mathbf{c}_{\bar{x}, k-N} + \mathbf{t}_{1, N},\\
            \mathbf{A}_{\hat{x}, k} &=
            \begin{bmatrix}
                \mathbf{R}_{3, N}\mathbf{G}_{\bar{x}, k-N} & \mathbf{R}_{4, N}
            \end{bmatrix},\\
            \mathbf{b}_{\hat{x}, k} &= 
            \mathbf{R}_{5, N}\mathbf{c}_{\bar{x}, k-N} + \mathbf{t}_{2, N} + \mathbf{y}_{k-N:k},\\
            \mathfrak{G}_{\hat{x}, k} &= 
            \mathfrak{G}_{\bar{x}, k-N}\times\mathfrak{C}_N,
        \end{aligned}
        \label{Eq:EstimatorSetParameters}
    \end{equation}
    where $\mathbf{y}_{k-N:k} = (\mathbf{y}_{k-N}, \dots, \mathbf{y}_k)$, and the auxiliary 
    variables $\mathbf{R}_{1, N}, \dots, \mathbf{R}_{5, N}$, $\mathbf{t}_{1, N}$, 
    $\mathbf{t}_{2, N}$, and $\mathfrak{C}_{N}$ are determined through $N$ iterations of the 
    following recursions:
    \begin{equation}
        \begin{aligned}
            \mathbf{R}_{1, l+1} &= \mathbf{\Phi}(T_s)\mathbf{R}_{1, l},\\
            \mathbf{R}_{2, l+1} &=
            \begin{bmatrix}
                \mathbf{\Phi}(T_s)\mathbf{R}_{2, l} & 
                \mathbf{\Gamma}(T_s)\mathbf{G}_{\tilde{w}} & \mathbf{0}
            \end{bmatrix},\\
            \mathbf{R}_{3, l+1} &=
            \begin{bmatrix}
                \mathbf{R}_{3, l}\\
                \mathbf{C}\mathbf{R}_{1, l+1}
            \end{bmatrix},\\
            \mathbf{R}_{4, l+1} &=
            \begin{bmatrix}
                \mathbf{R}_{4, l} & \mathbf{0} & \mathbf{0}\\
                \mathbf{C}\mathbf{\Phi}(T_s)\mathbf{R}_{2, l} & 
                \mathbf{C}\mathbf{\Gamma}(T_s)\mathbf{G}_{\tilde{w}} & \mathbf{G}_v
            \end{bmatrix},\\
            \mathbf{R}_{5, l+1} &=
            \begin{bmatrix}
                \mathbf{R}_{5, l}\\
                -\mathbf{C}\mathbf{R}_{1, l+1}
            \end{bmatrix},\\
            \mathbf{t}_{1, l+1} &= \mathbf{\Phi}(T_s)\mathbf{t}_{1, l} 
            + \mathbf{\Gamma}(T_s)\mathbf{c}_{\tilde{w}},\\
            \mathbf{t}_{2, l+1} &=
            \begin{bmatrix}
                \mathbf{t}_{2, l}\\
                -\mathbf{C}\mathbf{t}_{1, l+1}-\mathbf{c}_v
            \end{bmatrix},\\
            \mathfrak{C}_{l+1} &= 
            \mathfrak{C}_{l}\times\mathfrak{G}_{\tilde{w}}\times\mathfrak{G}_v,
        \end{aligned}
        \label{Eq:EstimatorParameters}
    \end{equation}
    initialized with $\mathbf{R}_{1, 0} = \mathbf{I}$, $\mathbf{R}_{2, 0} = \mathbf{0}$,  
    $\mathbf{R}_{3, 0} = \mathbf{C}$, $\mathbf{R}_{4, 0} = \mathbf{G}_v$, 
    $\mathbf{R}_{5, 0} = -\mathbf{C}$, $\mathbf{t}_{1, 0} = \mathbf{0}$, 
    $\mathbf{t}_{2, 0} = -\mathbf{c}_v$, $\mathfrak{C}_0 = \mathfrak{G}_v$.
\end{theorem}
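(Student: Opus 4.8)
The plan is to proceed by induction on the iteration index $l \in \{0, 1, \dots, N\}$, tracking how the CCG parameters of the intermediate estimate $\hat{\mathcal{X}}_{k-N+l}$ depend on the parameters of the conservative estimate $\bar{\mathcal{X}}_{k-N}$ and on the collected measurements $\mathbf{y}_{k-N}, \dots, \mathbf{y}_{k-N+l}$. The induction hypothesis is precisely that $\hat{\mathcal{X}}_{k-N+l}$ admits the representation in \eqref{Eq:EstimatorSetParameters} with $N$ replaced by $l$ and $\mathbf{y}_{k-N:k}$ replaced by the truncated stack $\mathbf{y}_{k-N:k-N+l}$, where the auxiliary variables obey the recursions in \eqref{Eq:EstimatorParameters}. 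The structural observation that makes this work is that, at every stage, the generator matrix and center split into a block that left-multiplies $\mathbf{G}_{\bar{x}, k-N}$ and $\mathbf{c}_{\bar{x}, k-N}$ (captured by $\mathbf{R}_{1, l}$, $\mathbf{R}_{3, l}$, $\mathbf{R}_{5, l}$) and a block intrinsic to the propagated disturbance and measurement sets (captured by $\mathbf{R}_{2, l}$, $\mathbf{R}_{4, l}$, $\mathbf{t}_{1, l}$, $\mathbf{t}_{2, l}$), while the generator-set product accumulates in $\mathfrak{C}_l$.

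For the base case $l = 0$, I would directly evaluate $\bar{\mathcal{X}}_{k-N} \cap_\mathbf{C} (\mathbf{y}_{k-N} - \mathcal{V})$ using the generalized-intersection identity in \eqref{Eq:SetOperationsCCGs}, together with the affine-map identity applied to $\mathcal{V}$ (with multiplier $-\mathbf{I}$ and offset $\mathbf{y}_{k-N}$). Since both $\bar{\mathcal{X}}_{k-N}$ and $\mathbf{y}_{k-N} - \mathcal{V}$ are unconstrained CCGs, the resulting constraint matrix and vector collapse to $\begin{bmatrix}\mathbf{C}\mathbf{G}_{\bar{x}, k-N} & \mathbf{G}_v\end{bmatrix}$ and $\mathbf{y}_{k-N} - \mathbf{c}_v - \mathbf{C}\mathbf{c}_{\bar{x}, k-N}$, which match \eqref{Eq:EstimatorSetParameters} exactly under the initializations $\mathbf{R}_{1, 0} = \mathbf{I}$, $\mathbf{R}_{3, 0} = \mathbf{C}$, $\mathbf{R}_{4, 0} = \mathbf{G}_v$, $\mathbf{R}_{5, 0} = -\mathbf{C}$, $\mathbf{t}_{1, 0} = \mathbf{0}$, $\mathbf{t}_{2, 0} = -\mathbf{c}_v$, and $\mathfrak{C}_0 = \mathfrak{G}_v$.

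For the inductive step, I would apply one iteration of \eqref{Eq:OptimalStateEstimate} to the hypothesized form of $\hat{\mathcal{X}}_{k-N+l}$, executing the three operations in sequence via the corresponding identities in \eqref{Eq:SetOperationsCCGs}: the affine map by $\mathbf{\Phi}(T_s)$, the Minkowski sum with $\mathbf{\Gamma}(T_s)\tilde{\mathcal{W}}$, and the generalized intersection $\cap_\mathbf{C}$ with $\mathbf{y}_{k-N+l+1} - \mathcal{V}$. The affine map premultiplies the $\mathbf{R}_1$ and $\mathbf{R}_2$ blocks by $\mathbf{\Phi}(T_s)$ and updates the center; the Minkowski sum appends the generator column $\mathbf{\Gamma}(T_s)\mathbf{G}_{\tilde{w}}$ and the center contribution $\mathbf{\Gamma}(T_s)\mathbf{c}_{\tilde{w}}$ while padding the constraint block with a zero column; and the intersection stacks a new constraint row built from $\mathbf{C}$ times the current generator and from $\mathbf{G}_v$, appends the new measurement to the constraint vector, and multiplies in $\mathfrak{G}_{\tilde{w}}\times\mathfrak{G}_v$. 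Regrouping the factors that multiply $\mathbf{G}_{\bar{x}, k-N}$ and $\mathbf{c}_{\bar{x}, k-N}$ then reproduces the recursions in \eqref{Eq:EstimatorParameters}; in particular, the identity $\mathbf{C}\mathbf{R}_{1, l+1} = \mathbf{C}\mathbf{\Phi}(T_s)\mathbf{R}_{1, l}$ is what lets the new intersection row collapse into the stacked forms of $\mathbf{R}_{3, l+1}$ and $\mathbf{R}_{5, l+1}$.

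The step I expect to be the main obstacle is not a single nontrivial idea but the careful block bookkeeping in the inductive step: correctly aligning the zero-padding introduced by the Minkowski sum and the intersection, and consistently separating the contribution that scales with the conservative-estimate parameters from the measurement-dependent and intrinsic contributions, so that the regrouped expressions line up column-for-column and row-for-row with the claimed recursions. Evaluating at $l = N$ then yields the stated closed-form parameters, completing the argument.
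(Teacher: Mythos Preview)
Your proposal is correct and matches the paper's own proof essentially step for step: both establish the base case $l=0$ by applying the generalized-intersection identity from \eqref{Eq:SetOperationsCCGs} to $\bar{\mathcal{X}}_{k-N} \cap_\mathbf{C} (\mathbf{y}_{k-N} - \mathcal{V})$ and reading off the initializations, then perform one iteration of \eqref{Eq:OptimalStateEstimate} via the affine-map, Minkowski-sum, and intersection identities to verify the recursions \eqref{Eq:EstimatorParameters}, concluding by induction. The only omission is that you did not list $\mathbf{R}_{2,0} = \mathbf{0}$ among the base-case initializations, but this is implicit in your description since the intersection identity produces the generator block $[\mathbf{G}_{\bar{x},k-N}\ \ \mathbf{0}]$.
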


\begin{proof}
    To prove this result, we have to show that the parameters of 
    $\bar{\mathcal{X}}_{k-N} \cap_\mathbf{C} (\mathbf{y}_{k-N} - \mathcal{V})$ can be expressed in 
    the form of \eqref{Eq:EstimatorSetParameters} and that performing an iteration of 
    \eqref{Eq:OptimalStateEstimate} produces a CCG with parameters of the same form. For this 
    purpose, let
    \begin{equation}
        \hat{\mathcal{X}}_{\bar{k}} = \bar{\mathcal{X}}_{\bar{k}} 
        \cap_\mathbf{C} (\mathbf{y}_{\bar{k}} - \mathcal{V}),
    \end{equation}
    where $\bar{k} = k-N$. Using the generalized intersection identity
    
    \newpage
    
    \noindent from \eqref{Eq:SetOperationsCCGs}, the CCG parameters of 
    $\hat{\mathcal{X}}_{\bar{k}}$ can be expressed as
    \begin{equation}
        \begin{aligned}
            \mathbf{G}_{\hat{x}, \bar{k}} &=
            \begin{bmatrix}
                \mathbf{G}_{\bar{x}, \bar{k}} & \mathbf{0}
            \end{bmatrix} =
            \begin{bmatrix}
                \mathbf{R}_{1, 0}\mathbf{G}_{\bar{x}, \bar{k}} & \mathbf{R}_{2, 0}
            \end{bmatrix},\\
            \mathbf{c}_{\hat{x}, \bar{k}} &= \mathbf{c}_{\bar{x}, \bar{k}} =
            \mathbf{R}_{1, 0}\mathbf{c}_{\bar{x}, \bar{k}} + \mathbf{t}_{1, 0},\\
            \mathbf{A}_{\hat{x}, \bar{k}} &=
            \begin{bmatrix}
                \mathbf{C}\mathbf{G}_{\bar{x}, \bar{k}} & \mathbf{G}_v
            \end{bmatrix} =
            \begin{bmatrix}
                \mathbf{R}_{3, 0}\mathbf{G}_{\bar{x}, \bar{k}} & \mathbf{R}_{4, 0}
            \end{bmatrix},\\
            \mathbf{b}_{\hat{x}, \bar{k}} &= -\mathbf{C}\mathbf{c}_{\bar{x}, \bar{k}} 
            - \mathbf{c}_v + \mathbf{y}_{\bar{k}} = \mathbf{R}_{5, 0}\mathbf{c}_{\bar{x}, \bar{k}} 
            + \mathbf{t}_{2, 0} + \mathbf{y}_{\bar{k}},\\
            \mathfrak{G}_{\hat{x}, \bar{k}} &= \mathfrak{G}_{\bar{x}, \bar{k}}\times\mathfrak{G}_v 
            = \mathfrak{G}_{\bar{x}, \bar{k}}\times\mathfrak{C}_0,
        \end{aligned}
    \end{equation}
    with the auxiliary variables $\mathbf{R}_{1, 0} = \mathbf{I}$, 
    $\mathbf{R}_{2, 0} = \mathbf{0}$, $\mathbf{R}_{3, 0} = \mathbf{C}$, 
    $\mathbf{R}_{4, 0} = \mathbf{G}_v$, $\mathbf{R}_{5, 0} = -\mathbf{C}$, 
    $\mathbf{t}_{1, 0} = \mathbf{0}$, $\mathbf{t}_{2, 0} = -\mathbf{c}_v$, and 
    $\mathfrak{C}_0 = \mathfrak{G}_v$. Applying now an iteration of \eqref{Eq:OptimalStateEstimate} 
    to $\hat{\mathcal{X}}_{\bar{k}}$, using the operation identities from 
    \eqref{Eq:SetOperationsCCGs}, yields a state estimate $\hat{\mathcal{X}}_{\bar{k}+1}$ whose CCG 
    parameters are given by
    \begin{equation}
        \begin{aligned}
            \mathbf{G}_{\hat{x}, \bar{k}+1} &=
            \begin{bmatrix}
                \mathbf{\Phi}(T_s)\mathbf{G}_{\hat{x}, \bar{k}} 
                & \mathbf{\Gamma}(T_s)\mathbf{G}_{\tilde{w}} & \mathbf{0}
            \end{bmatrix}\\ &=
            \begin{bmatrix}
                \mathbf{\Phi}(T_s)\mathbf{R}_{1, 0}\mathbf{G}_{\bar{x}, \bar{k}} 
                & \mathbf{\Phi}(T_s)\mathbf{R}_{2, 0} & \mathbf{\Gamma}(T_s)\mathbf{G}_{\tilde{w}} 
                & \mathbf{0}
            \end{bmatrix}\\ &=
            \begin{bmatrix}
                \mathbf{R}_{1, 1}\mathbf{G}_{\bar{x}, \bar{k}} & \mathbf{R}_{2, 1}
            \end{bmatrix},\\
            \mathbf{c}_{\hat{x}, \bar{k}+1} &= 
            \mathbf{\Phi}(T_s)\mathbf{c}_{\hat{x}, \bar{k}} 
            + \mathbf{\Gamma}(T_s)\mathbf{c}_{\tilde{w}}\\
            &= \mathbf{\Phi}(T_s)\mathbf{R}_{1, 0}\mathbf{c}_{\bar{x}, \bar{k}} 
            + \mathbf{\Phi}(T_s)\mathbf{t}_{1, 0} 
            + \mathbf{\Gamma}(T_s)\mathbf{c}_{\tilde{w}}\\ 
            &= \mathbf{R}_{1, 1}\mathbf{c}_{\bar{x}, \bar{k}} + \mathbf{t}_{1, 1},\\
            \mathbf{A}_{\hat{x}, \bar{k}+1} &=
            \begin{bmatrix}
                \mathbf{A}_{\hat{x}, \bar{k}} & \mathbf{0} & \mathbf{0}\\
                \mathbf{C}\mathbf{\Phi}(T_s)\mathbf{G}_{\hat{x}, \bar{k}} 
                & \mathbf{C}\mathbf{\Gamma}(T_s)\mathbf{G}_{\tilde{w}} & \mathbf{G}_v
            \end{bmatrix}\\ &= \scalebox{0.96}{$
            \begin{bmatrix}
                \mathbf{R}_{3, 0}\mathbf{G}_{\bar{x}, \bar{k}} & \mathbf{R}_{4, 0} 
                & \mathbf{0} & \mathbf{0}\\
                \mathbf{C}\mathbf{R}_{1, 1}\mathbf{G}_{\bar{x}, \bar{k}} 
                & \mathbf{C}\mathbf{\Phi}(T_s)\mathbf{R}_{2, 0} 
                & \mathbf{C}\mathbf{\Gamma}(T_s)\mathbf{G}_{\tilde{w}} & \mathbf{G}_v
            \end{bmatrix}$}\\ &=
            \begin{bmatrix}
                \mathbf{R}_{3, 1}\mathbf{G}_{\bar{x}, \bar{k}} & \mathbf{R}_{4, 1}
            \end{bmatrix},\\
            \mathbf{b}_{\hat{x}, \bar{k}+1} &= 
            \begin{bmatrix}
                \mathbf{b}_{\hat{x}, \bar{k}}\\
                -\mathbf{C}\mathbf{c}_{\hat{x}, \bar{k}+1} 
                - \mathbf{c}_v + \mathbf{y}_{\bar{k}+1}
            \end{bmatrix}\\ &=
            \begin{bmatrix}
                \mathbf{R}_{5, 0}\mathbf{c}_{\bar{x}, \bar{k}} 
                + \mathbf{t}_{2, 0} + \mathbf{y}_{\bar{k}}\\
                -\mathbf{C}\mathbf{R}_{1, 1}\mathbf{c}_{\bar{x}, \bar{k}} 
                -\mathbf{C}\mathbf{t}_{1, 1} - \mathbf{c}_v + \mathbf{y}_{\bar{k}+1}
            \end{bmatrix}\\ &=
            \mathbf{R}_{5, 1}\mathbf{c}_{\bar{x}, \bar{k}} + \mathbf{t}_{2, 1} 
            + \mathbf{y}_{\bar{k}:\bar{k}+1},\\
            \mathfrak{G}_{\hat{x}, \bar{k}+1} &= 
            \mathfrak{G}_{\hat{x}, \bar{k}}\times\mathfrak{G}_{\tilde{w}}\times\mathfrak{G}_v\\ 
            &= \mathfrak{G}_{\bar{x}, \bar{k}}\times\mathfrak{C}_0
            \times\mathfrak{G}_{\tilde{w}}\times\mathfrak{G}_v\\
            &= \mathfrak{G}_{\bar{x}, \bar{k}}\times\mathfrak{C}_1,
        \end{aligned}
    \end{equation}
    where the auxiliary variables $\mathbf{R}_{1, 1}, \dots, \mathbf{R}_{5, 1}$, 
    $\mathbf{t}_{1, 1}$, $\mathbf{t}_{2, 1}$, and $\mathfrak{C}_{1}$ are given by the 
    recursive formulas in \eqref{Eq:EstimatorParameters}. Therefore, since the CCG parameters of 
    $\hat{\mathcal{X}}_{\bar{k}}$ can be written as in \eqref{Eq:EstimatorSetParameters}, 
    and one iteration of \eqref{Eq:OptimalStateEstimate} produces a CCG with parameters of the same 
    form, the result follows by induction.
\end{proof}

\begin{remark}
    For simplicity, in \eqref{Eq:FiniteHorizonSets} and \eqref{Eq:ConservativeEstimate}, we only 
    consider CCGs with no equality constraints, but the result can be readily extended to CCGs with 
    equality constraints in \eqref{Eq:FiniteHorizonSets} and \eqref{Eq:ConservativeEstimate}. 
    Nevertheless, note that the equality constraints can always be eliminated by explicitly writing 
    their solutions and modifying the remaining CCG parameters accordingly.
\end{remark}

As shown in the previous result, the recursive formulas in \eqref{Eq:EstimatorParameters} only 
involve fixed parameters, meaning that the variables $\mathbf{R}_{1, N}, \dots, \mathbf{R}_{5, N}$, 
$\mathbf{t}_{1, N}$, $\mathbf{t}_{2, N}$, and $\mathfrak{C}_{N}$ can be precomputed for a given 
horizon $N$. Using these precomputed parameters, the state estimate $\hat{\mathcal{X}}_k$ at each 
sampling instant $t_k \geq NT_s$ can then be explicitly computed from the conservative estimate 
$\bar{\mathcal{X}}_{k-N}$ using the expressions in \eqref{Eq:EstimatorSetParameters}. During the 
initial offset phase, when $t_k < NT_s$, a plausible approach is to obtain the current estimate 
directly from the conservative estimator. Algorithms \ref{Alg:EstimatorParameters} and 
\ref{Alg:ExplicitEstimator} outline the proposed finite-horizon scheme.

\newpage

Regarding the conservative estimator, a practical choice is to consider an ellipsoidal observer 
derived from a Luenberger observer, as suggested in \cite{rego2024explicit}. However, alternative 
approaches are also viable, provided that they produce CCG estimates with a fixed-length 
representation.

\begin{algorithm}[t]
    \caption{Precomputation of the Estimator Parameters}
    \label{Alg:EstimatorParameters}
    \begin{algorithmic}[1]
        \Require $\mathbf{\Phi}(T_s)$, $\mathbf{\Gamma}(T_s)$, $\mathbf{C}$, 
        $\mathbf{G}_{\tilde{w}}$, $\mathbf{c}_{\tilde{w}}$, $\mathfrak{G}_{\tilde{w}}$, 
        $\mathbf{G}_v$, $\mathbf{c}_v$, $\mathfrak{G}_v$, $N$
        \State $\mathbf{R}_{1, 0} \gets \mathbf{I}$, $\mathbf{R}_{2, 0} \gets \mathbf{0}$,  
        $\mathbf{R}_{3, 0} \gets \mathbf{C}$, $\mathbf{R}_{4, 0} \gets \mathbf{G}_v$,\\ 
        $\mathbf{R}_{5, 0} \gets -\mathbf{C}$, $\mathbf{t}_{1, 0} \gets \mathbf{0}$, 
        $\mathbf{t}_{2, 0} \gets -\mathbf{c}_v$, $\mathfrak{C}_0 \gets \mathfrak{G}_v$
        \For{$l \gets 0$ to $N-1$}
            \State compute $\mathbf{R}_{1, l+1}, \dots, \mathbf{R}_{5, l+1}$, 
            $\mathbf{t}_{1, l+1}$, $\mathbf{t}_{2, l+1}$, $\mathfrak{C}_{l+1}$ 
            \eqref{Eq:EstimatorParameters}
        \EndFor
        \State\Return $\mathbf{R}_{1, N}, \dots, \mathbf{R}_{5, N}$, $\mathbf{t}_{1, N}$, 
        $\mathbf{t}_{2, N}$, $\mathfrak{C}_N$
    \end{algorithmic}
\end{algorithm}

\begin{algorithm}[t]
    \caption{Explicit Finite-Horizon Estimator}
    \label{Alg:ExplicitEstimator}
    \begin{algorithmic}[1]
        \Require $\mathbf{R}_{1, N}, \dots, \mathbf{R}_{5, N}$, $\mathbf{t}_{1, N}$, 
        $\mathbf{t}_{2, N}$, $\mathfrak{C}_N$, $\mathbf{y}_{0:k}$, $k$, $N$
        \If{$k < N$}
            \State compute $\bar{\mathcal{X}}_k$ using conservative estimator
            \State $\hat{\mathcal{X}}_k \gets \bar{\mathcal{X}}_k$
        \Else
            \State compute $\bar{\mathcal{X}}_{k-N}$ using conservative estimator
            \State compute $\hat{\mathcal{X}}_k$ using 
            \eqref{Eq:EstimatorEstimate}-\eqref{Eq:EstimatorSetParameters}
        \EndIf
        \State\Return $\hat{\mathcal{X}}_k$
    \end{algorithmic}
\end{algorithm}


\section{\texorpdfstring{Control Barrier Functions for\\Constrained Convex Generators}{}} 
\label{Sec:Solution3}

As detailed in Section \ref{Sec:Solution1}, given a guaranteed estimate $\hat{\mathcal{X}}_k$ of an 
obstacle state at the sampling instant $t_k$, the evolution of the estimated obstacle set during 
the sampling interval $[t_k, t_{k+1})$ can be described by the set-valued flow 
$\hat{\mathcal{O}}^+_k: [t_k, t_{k+1}) \rightrightarrows \mathbb{R}^p$, which is defined for all 
$t \in [t_k, t_{k+1})$ as
\begin{equation}
    \hat{\mathcal{O}}^+_k(t) = \mathbf{E}\left(\mathbf{\Phi}(t - t_k)\hat{\mathcal{X}}_k 
    \oplus \mathbf{\Gamma}(t - t_k)\tilde{\mathcal{W}}\right) \oplus \bar{\mathcal{O}}^+,
    \label{Eq:ObstacleSetEstimates}
\end{equation}
where the obstacle index $i$ has been omitted for brevity. Following the state estimation approach 
described in the previous section, the sets $\tilde{\mathcal{W}}$ and $\hat{\mathcal{X}}_k$ are 
CCGs, defined by \eqref{Eq:FiniteHorizonSets} and \eqref{Eq:EstimatorEstimate}. Thus, if the body 
set $\bar{\mathcal{O}}^+$ is also modeled as a CCG, then the estimated obstacle set 
$\hat{\mathcal{O}}^+_k(t)$ is a CCG and can be expressed in closed form using the set-operation 
identities from \eqref{Eq:SetOperationsCCGs}.

More specifically, let the body set be defined as
\begin{equation}
    \bar{\mathcal{O}}^+ = \left(\mathbf{G}_{\bar{o}}^+, \mathbf{c}_{\bar{o}}^+, 
    [\,\,], [\,\,], \mathfrak{G}_{\bar{o}}^+\right).
\end{equation}
Then, by applying the identities from \eqref{Eq:SetOperationsCCGs} to 
\eqref{Eq:ObstacleSetEstimates}, we conclude that the estimated obstacle set at each time
$t \in [t_k, t_{k+1})$ is a CCG of the form
\begin{equation}
    \hat{\mathcal{O}}^+_k(t) = \left(\mathbf{G}_{\hat{o}, k}^+(t), \mathbf{c}_{\hat{o}, k}^+(t), 
    \mathbf{A}_{\hat{o}, k}^+, \mathbf{b}_{\hat{o}, k}^+, \mathfrak{G}_{\hat{o}, k}^+\right),
\end{equation}
where the CCG parameters are given by
\begin{equation}
    \begin{aligned}
        \mathbf{G}_{\hat{o}, k}^+(t) &= 
        \begin{bmatrix}
            \mathbf{E}\mathbf{\Phi}(t - t_k)\mathbf{G}_{\hat{x}, k} 
            & \mathbf{E}\mathbf{\Gamma}(t - t_k)\mathbf{G}_{\tilde{w}}
            & \mathbf{G}_{\bar{o}}^+
        \end{bmatrix},\\
        \mathbf{c}_{\hat{o}, k}^+(t) &= \mathbf{E}\mathbf{\Phi}(t - t_k)\mathbf{c}_{\hat{x}, k} 
        + \mathbf{E}\mathbf{\Gamma}(t - t_k)\mathbf{c}_{\tilde{w}} 
        + \mathbf{c}_{\bar{o}}^+,\\
        \mathbf{A}_{\hat{o}, k}^+ &=
        \begin{bmatrix}
            \mathbf{A}_{\hat{x}, k} & \mathbf{0} & \mathbf{0}
        \end{bmatrix},\\
        \mathbf{b}_{\hat{o}, k}^+ &= \mathbf{b}_{\hat{x}, k},\\
        \mathfrak{G}_{\hat{o}, k}^+ &= 
        \mathfrak{G}_{\hat{x}, k}\times\mathfrak{G}_{\tilde{w}}\times\mathfrak{G}_{\bar{o}}^+,
    \end{aligned}
\end{equation}
from which it becomes clear that the functions $\mathbf{G}_{\hat{o}, k}^+$ and 
$\mathbf{c}_{\hat{o}, k}^+$ are (infinitely) continuously differentiable. Moreover, since 
$\tilde{\mathcal{W}}$, $\hat{\mathcal{X}}_k$, and $\bar{\mathcal{O}}^+$ are compact sets, and the 
body set has a nonempty interior, it follows that the estimated obstacle set is a compact set with 
nonempty interior.

With the set-valued flow $\hat{\mathcal{O}}^+_k$ in hand, the next step is then to construct a CBF 
that captures the safety objective of avoiding the respective obstacle. However, since the 
estimated obstacle set at each time instant is a CCG, we cannot directly convert 
$\hat{\mathcal{O}}^+_k$ into a CBF. Accordingly, the subproblem addressed in this section is stated 
as follows.

\begin{subproblem}[CBFs for CCGs]
    Let $\mathcal{O}: [t_0, t_\text{f}) \rightrightarrows \mathbb{R}^p$ be a set-valued flow 
    defined for all $t \in [t_0, t_\text{f})$ as
    \begin{equation}
        \begin{aligned}
            \mathcal{O}(t) &= \{\mathbf{G}(t)\bm{\xi} + \mathbf{c}(t): 
            \mathbf{A}\bm{\xi} = \mathbf{b}, \bm{\xi} \in \mathfrak{G}\}\\
            &= (\mathbf{G}(t), \mathbf{c}(t), \mathbf{A}, \mathbf{b}, \mathfrak{G}),
        \end{aligned}
        \label{Eq:Subproblem2}
    \end{equation}
    where $\mathcal{O}(t)$ is a compact set with nonempty interior, the functions 
    $\mathbf{G}: [t_0, t_\text{f}) \rightarrow \mathbb{R}^{p\times\xi}$, 
    $\mathbf{c}: [t_0, t_\text{f}) \rightarrow \mathbb{R}^p$ are continuously differentiable, 
    $\mathbf{A} \in \mathbb{R}^{c\times\xi}$, $\mathbf{b} \in \mathbb{R}^c$, and 
    $\mathfrak{G} = \mathcal{G}_1\times\mathcal{G}_2\times\dots\times\mathcal{G}_G$, where, for 
    each $j \in \mathcal{J} = \{1, \dots, G\}$, the generator set $\mathcal{G}_j$ is the 
    zero-sublevel set of a convex function $g_j: \mathbb{R}^{\xi_j} \rightarrow \mathbb{R}$. Then, 
    convert 
    $\mathcal{O}$ into a CBF $h: \mathbb{R}^p\times[t_0, t_\text{f}) \rightarrow \mathbb{R}$ for 
    \eqref{Eq:AgentDynamics1}, with an associated safe set-valued flow 
    $\mathcal{C}: [t_0, t_\text{f}) \rightrightarrows \mathbb{R}^p$ such that 
    $\mathcal{C}(t) \subseteq \mathbb{R}^p\setminus\mathcal{O}(t)$ for all 
    $t \in [t_0, t_\text{f})$.
\end{subproblem}

The next subsection details the proposed conversion process, using the generic notation in the 
previous statement for clarity.

\subsection{Conversion Procedure}

The first step of the conversion process is to eliminate the linear equality constraint from 
\eqref{Eq:Subproblem2}. Since $\mathcal{O}(t)$ has nonempty interior, this linear equation admits 
infinitely many solutions, which can be expressed in the parametric form
\begin{equation}
    \bm{\xi} = \mathbf{A}^\dagger\mathbf{b} + \mathbf{N}_\mathbf{A}\bm{\eta},
\end{equation}
where $\mathbf{A}^\dagger \in \mathbb{R}^{\xi\times c}$ is the pseudoinverse of $\mathbf{A}$, 
$\mathbf{N}_\mathbf{A} \in \mathbb{R}^{\xi\times\eta}$ is a matrix whose columns form an 
orthonormal basis for the null space of $\mathbf{A}$, and $\bm{\eta} \in \mathbb{R}^\eta$ is a 
vector of free parameters. As no assumptions are made about the rank of $\mathbf{A}$, its 
pseudoinverse is obtained through the Singular Value Decomposition (SVD):
\begin{equation}
    \mathbf{A} = \mathbf{U}_\mathbf{A}\mathbf{\Sigma}_\mathbf{A}\mathbf{V}_\mathbf{A}^\top,
\end{equation}
where the columns of $\mathbf{U}_\mathbf{A}$ and $\mathbf{V}_\mathbf{A}$ form an 
orthonormal basis for the column and row spaces of $\mathbf{A}$, respectively, and 
$\mathbf{\Sigma}_\mathbf{A}$ is a diagonal matrix with the singular values on its entries. From 
this decomposition, the pseudoinverse follows as
\begin{equation}
    \mathbf{A}^\dagger = 
    \mathbf{V}_\mathbf{A}\mathbf{\Sigma}_\mathbf{A}^{-1}\mathbf{U}_\mathbf{A}^\top.
\end{equation}
Using the previous parameterization, the set $\mathcal{O}(t)$ can then be equivalently expressed 
with no equality constraint as
\begin{equation}
    \mathcal{O}(t) = \left(\mathbf{G}(t)\mathbf{N}_\mathbf{A}, 
    \mathbf{c}(t) + \mathbf{G}(t)\mathbf{A}^\dagger\mathbf{b}, [\,\,], [\,\,], \mathfrak{G}'\right).
\end{equation}
Here, the new generator set $\mathfrak{G}'$ is given by
\begin{equation}
    \mathfrak{G}' = \{\bm{\eta} \in \mathbb{R}^{\eta}: f_j(\bm{\eta}) \leq 0 
    \text{ for all } j \in \mathcal{J}\},
\end{equation}
where each function $f_j: \mathbb{R}^{\eta} \rightarrow \mathbb{R}$ is defined as
\begin{equation}
    f_j(\bm{\eta}) = g_j\left(\mathbf{S}_j\mathbf{A}^\dagger\mathbf{b} 
    + \mathbf{S}_j\mathbf{N}_\mathbf{A}\bm{\eta}\right)
\end{equation}
for all $\bm{\eta} \in \mathbb{R}^{\eta}$, where 
$\mathbf{S}_j \in \mathbb{R}^{\xi_j\times\xi}$ is an auxiliary matrix that 

\newpage

\noindent selects the $j$th individual generator vector $\bm{\xi}_j$ from the overall generator 
vector $\bm{\xi} = (\bm{\xi}_1, \dots, \bm{\xi}_G)$. Moreover, since $\mathcal{O}(t)$ has nonempty 
interior, the matrix $\mathbf{G}(t)\mathbf{N}_\mathbf{A}$ has full row rank.

Observe now that the generator set $\mathfrak{G}'$ can be equivalently expressed using the maximum 
function as
\begin{equation}
    \mathfrak{G}' = \left\{\bm{\eta} \in \mathbb{R}^{\eta}: 
    \max_{j \in \mathcal{\mathcal{J}}}\,f_j(\bm{\eta}) \leq 0\right\}.
\end{equation}
However, since the maximum function is not differentiable, it cannot be directly used in the CBF 
design. To overcome this, we apply a smooth underapproximation of the maximum func- 

\noindent tion, adopting the LogSumExp approach once more from \cite{molnar2023composing}. 
Specifically, we define a set-valued flow 
$\tilde{\mathcal{O}}: [t_0, t_\text{f}) \rightrightarrows \mathbb{R}^p$ as
\begin{equation}
    \tilde{\mathcal{O}}(t) = \left(\tilde{\mathbf{G}}(t), \tilde{\mathbf{c}}(t), [\,\,], [\,\,], 
    \tilde{\mathfrak{G}}\right)
\end{equation}
for all $t \in [t_0, t_\text{f})$, with parameters given by
\begin{equation}
    \begin{aligned}
        \tilde{\mathbf{G}}(t) &= \mathbf{G}(t)\mathbf{N}_\mathbf{A},\\
        \tilde{\mathbf{c}}(t) &= \mathbf{c}(t) + \mathbf{G}(t)\mathbf{A}^\dagger\mathbf{b},\\
        \tilde{\mathfrak{G}} &= \{\bm{\eta} \in \mathbb{R}^{\eta}: f(\bm{\eta}) \leq 0\},
    \end{aligned}
\end{equation}
where the function $f: \mathbb{R}^{\eta} \rightarrow \mathbb{R}$ is defined for all 
$\bm{\eta} \in \mathbb{R}^\eta$ as
\begin{equation}
    f(\bm{\eta}) = \frac{1}{\gamma}
    \ln\left(\sum_{j \in \mathcal{J}}\exp(\gamma f_j(\bm{\eta}))\right)- \frac{\ln(G+1)}{\gamma},
\end{equation}
with smoothing parameter $\gamma \in \mathbb{R}_{>0}$. This construction ensures that, for all 
$\bm{\eta} \in \mathbb{R}^\eta$, we have
\begin{equation}
    f(\bm{\eta}) < \max_{j \in \mathcal{\mathcal{J}}} f_j(\bm{\eta}),
\end{equation}
implying that $\mathcal{O}(t) \subset \text{int}\big(\tilde{\mathcal{O}}(t)\big)$ for all 
$t \in [t_0, t_\text{f})$, such that
\begin{equation}
    \lim_{\gamma\rightarrow\infty} \tilde{\mathcal{O}}(t) = \mathcal{O}(t).
\end{equation}
Moreover, note that $f$ is convex, as each function $f_j$ is convex and $f$ is defined through 
convexity-preserving operations.

Since the goal is to obtain a CBF for the agent dynamics in \eqref{Eq:AgentDynamics1}, we must now 
express the set $\tilde{\mathcal{O}}(t)$ in terms of the agent’s position. To this end, note that a 
point $\mathbf{p} \in \mathbb{R}^p$ belongs to $\tilde{\mathcal{O}}(t)$ if and only if there exists 
some $\bm{\eta} \in \mathbb{R}^\eta$ so that $f(\bm{\eta}) \leq 0$ and 
$\tilde{\mathbf{G}}(t)\bm{\eta} + \tilde{\mathbf{c}}(t) = \mathbf{p}$. Formally, this means that 
\begin{equation}
    \begin{aligned}
        \tilde{\mathcal{O}}(t) = \big\{\mathbf{p} \in \mathbb{R}^p: 
        \exists \bm{\eta} \in \mathbb{R}^\eta:\,\, &f(\bm{\eta}) \leq 0,\\
        &\tilde{\mathbf{G}}(t)\bm{\eta} + \tilde{\mathbf{c}}(t) = \mathbf{p}\big\}.
    \end{aligned}
\end{equation}
Consequently, this set can be equivalently written using a CBF candidate 
$h: \mathbb{R}^p\times[t_0, t_\text{f}) \rightarrow \mathbb{R}$ as
\begin{equation}
     \tilde{\mathcal{O}}(t) = \{\mathbf{p} \in \mathbb{R}^p: h(\mathbf{p}, t) \leq 0\},
     \label{Eq:CBFCCGFlow}
\end{equation}
where $h$ is defined via the following optimization problem:
\begin{equation} 
    \begin{aligned}
        h(\mathbf{p}, t) = \min_{\bm{\eta} \in \mathbb{R}^\eta}\,\, &f(\bm{\eta})\\
        \text{subject to}\,\, 
        &\tilde{\mathbf{G}}(t)\bm{\eta} + \tilde{\mathbf{c}}(t) = \mathbf{p},
    \end{aligned}
    \label{Eq:CBFCCG}
\end{equation}
provided that \eqref{Eq:CBFCCG} is solvable for all 
$(\mathbf{p}, t) \in \mathbb{R}^p\times[t_0, t_\text{f})$. The next result now provides a mild 
sufficient condition for $h$ to qualify as a valid CBF for the agent dynamics in 
\eqref{Eq:AgentDynamics1}.

\begin{theorem} \label{Th:TheoremCBGCCG}
    Let $\tilde{\mathcal{O}}: [t_0, t_\text{f}) \rightrightarrows \mathbb{R}^p$ be a set-valued 
    flow defined as in \eqref{Eq:CBFCCGFlow}, where $\tilde{\mathcal{O}}(t)$ has nonempty interior 
    for all $t \in [t_0, t_\text{f})$, and 
    $h: \mathbb{R}^p\times[t_0, t_\text{f}) \rightarrow \mathbb{R}$ is defined by 
    \eqref{Eq:CBFCCG}, where the func-
    
    \noindent tions \hfill 
    $\tilde{\mathbf{G}}: [t_0, t_\text{f}) \rightarrow \mathbb{R}^{p\times\eta}$, \hfill
    $\tilde{\mathbf{c}}: [t_0, t_\text{f}) \rightarrow \mathbb{R}^p$ \hfill are \hfill continuously

    \newpage
    
    \noindent differentiable, and $\tilde{\mathbf{G}}(t)$ has full row rank for all 
    $t \in [t_0, t_\text{f})$. If $f: \mathbb{R}^{\eta} \rightarrow \mathbb{R}$ is twice 
    continuously differentiable and strictly convex, then $h$ is a CBF for the system 
    \eqref{Eq:AgentDynamics1}.
\end{theorem}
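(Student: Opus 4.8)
The plan is to verify that $h$, defined implicitly by the convex program \eqref{Eq:CBFCCG}, satisfies the three hypotheses of the earlier proposition giving sufficient conditions for a valid CBF for \eqref{Eq:AgentDynamics1}: (i) $h$ is continuously differentiable; (ii) $\frac{\partial}{\partial\mathbf{p}}h(\mathbf{p}, t) \neq \mathbf{0}$ whenever $h(\mathbf{p}, t) = 0$; and (iii) $\frac{\partial h}{\partial t}(\mathbf{p}, t) \geq 0$ wherever $\frac{\partial}{\partial\mathbf{p}}h(\mathbf{p}, t) = \mathbf{0}$ and $h(\mathbf{p}, t) > 0$. The whole analysis is driven by the first-order optimality system of \eqref{Eq:CBFCCG}. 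Introducing a multiplier $\bm{\lambda} \in \mathbb{R}^p$ for the equality constraint, the Lagrangian is $\mathcal{L}(\bm{\eta}, \bm{\lambda}, \mathbf{p}, t) = f(\bm{\eta}) + \bm{\lambda}^\top(\tilde{\mathbf{G}}(t)\bm{\eta} + \tilde{\mathbf{c}}(t) - \mathbf{p})$, and the stationarity conditions read $\nabla f(\bm{\eta}) + \tilde{\mathbf{G}}(t)^\top\bm{\lambda} = \mathbf{0}$ and $\tilde{\mathbf{G}}(t)\bm{\eta} + \tilde{\mathbf{c}}(t) = \mathbf{p}$. Since $\tilde{\mathbf{G}}(t)$ has full row rank, the feasible set is a nonempty affine subspace for every $\mathbf{p}$; with solvability assumed and $f$ strictly convex, the minimizer $\bm{\eta}^*(\mathbf{p}, t)$ and, by injectivity of $\tilde{\mathbf{G}}(t)^\top$, the multiplier $\bm{\lambda}^*(\mathbf{p}, t)$ are unique, so $h(\mathbf{p}, t) = f(\bm{\eta}^*(\mathbf{p}, t))$ is well defined.

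For (i) I would apply the Implicit Function Theorem to the stationarity map $F(\bm{\eta}, \bm{\lambda}, \mathbf{p}, t) = \big(\nabla f(\bm{\eta}) + \tilde{\mathbf{G}}(t)^\top\bm{\lambda},\, \tilde{\mathbf{G}}(t)\bm{\eta} + \tilde{\mathbf{c}}(t) - \mathbf{p}\big)$, which is continuously differentiable because $f$ is twice continuously differentiable and $\tilde{\mathbf{G}}, \tilde{\mathbf{c}}$ are continuously differentiable. Its Jacobian with respect to $(\bm{\eta}, \bm{\lambda})$ is the bordered KKT matrix $\left[\begin{smallmatrix} \nabla^2 f(\bm{\eta}) & \tilde{\mathbf{G}}(t)^\top \\ \tilde{\mathbf{G}}(t) & \mathbf{0} \end{smallmatrix}\right]$. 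The crux of the argument is that this matrix is nonsingular at every solution; I expect this from the positive definiteness of $\nabla^2 f$ on $\ker\tilde{\mathbf{G}}(t)$ (furnished by strict convexity) together with the full row rank of $\tilde{\mathbf{G}}(t)$, which is exactly the classical nonsingularity criterion for saddle-point systems. The IFT then yields continuously differentiable local solutions $(\bm{\eta}^*, \bm{\lambda}^*)$, and uniqueness lets these local branches be identified with the global optimal pair, so $\bm{\eta}^*$ and $\bm{\lambda}^*$ are continuously differentiable on all of $\mathbb{R}^p\times[t_0, t_\text{f})$ and hence so is $h$. Along the way, the envelope theorem applied to $\mathcal{L}$ gives the sensitivities $\frac{\partial}{\partial\mathbf{p}}h = -\bm{\lambda}^{*\top}$ and $\frac{\partial h}{\partial t} = \bm{\lambda}^{*\top}\big(\frac{d\tilde{\mathbf{G}}}{dt}\bm{\eta}^* + \frac{d\tilde{\mathbf{c}}}{dt}\big)$, which I will use for the remaining two conditions.

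Conditions (ii) and (iii) both reduce to controlling where $\bm{\lambda}^*$ vanishes, since $\frac{\partial}{\partial\mathbf{p}}h = \mathbf{0}$ is equivalent to $\bm{\lambda}^* = \mathbf{0}$. When $\bm{\lambda}^* = \mathbf{0}$, stationarity forces $\nabla f(\bm{\eta}^*) = \mathbf{0}$, so by convexity $\bm{\eta}^*$ is the global unconstrained minimizer of $f$ and $h$ attains the constant value $\min_{\bm{\eta}} f(\bm{\eta})$. For (ii), suppose this occurred at a point with $h = 0$; then $\min_{\bm{\eta}} f(\bm{\eta}) = 0$, and strict convexity makes the sublevel set $\{\bm{\eta}: f(\bm{\eta}) \leq 0\}$ a single point, forcing $\tilde{\mathcal{O}}(t)$ to be a singleton and contradicting the standing assumption that $\tilde{\mathcal{O}}(t)$ has nonempty interior. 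Hence $\bm{\lambda}^* \neq \mathbf{0}$, i.e. $\frac{\partial}{\partial\mathbf{p}}h \neq \mathbf{0}$, whenever $h = 0$. For (iii), at any point where $\frac{\partial}{\partial\mathbf{p}}h = \mathbf{0}$ we have $\bm{\lambda}^* = \mathbf{0}$, and the envelope formula then gives $\frac{\partial h}{\partial t} = 0 \geq 0$, so the last hypothesis holds (in fact with equality). Invoking the earlier proposition completes the proof.

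The step I expect to be the main obstacle is the nonsingularity of the bordered KKT matrix underpinning the IFT: it is where the two structural hypotheses---strict convexity of $f$ and full row rank of $\tilde{\mathbf{G}}(t)$---must be combined, and care is needed to ensure that the reduced Hessian $\mathbf{N}^\top\nabla^2 f(\bm{\eta}^*)\mathbf{N}$ (with the columns of $\mathbf{N}$ spanning $\ker\tilde{\mathbf{G}}(t)$) is genuinely positive definite rather than merely positive semidefinite. The second delicate point is the emptiness-of-interior contradiction used in (ii), which is precisely what rules out a degenerate vanishing of the spatial gradient on the boundary $\{h = 0\}$.
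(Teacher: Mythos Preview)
Your argument is correct. Both you and the paper establish continuous differentiability of $h$ via the Implicit Function Theorem applied to the KKT system, using that the bordered matrix
$\left[\begin{smallmatrix}\nabla^2 f & \tilde{\mathbf{G}}^\top\\ \tilde{\mathbf{G}} & \mathbf{0}\end{smallmatrix}\right]$
is nonsingular; the paper makes this explicit through a Schur-complement determinant, while you invoke the standard saddle-point criterion. The genuine divergence is in how the non-vanishing spatial gradient on $\{h\geq 0\}$ is obtained. You route through Proposition~1 and the envelope identity $\frac{\partial h}{\partial\mathbf{p}}=-\bm{\lambda}^{*\top}$: if $\bm{\lambda}^*=\mathbf{0}$, stationarity gives $\nabla f(\bm{\eta}^*)=\mathbf{0}$, so $h=\min f$, and the nonempty-interior hypothesis forces $\min f<0$. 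The paper instead verifies Definition~\ref{Def:CBF} directly by showing that $h(\cdot,t)$ is \emph{convex}---its epigraph is the projection of the convex set $\{(\mathbf{p},s,\bm{\eta}):f(\bm{\eta})\leq s,\ \tilde{\mathbf{G}}\bm{\eta}+\tilde{\mathbf{c}}=\mathbf{p}\}$---and a smooth convex function whose zero-sublevel set has nonempty interior cannot have a critical point where $h\geq 0$. Your approach has the advantage of producing the explicit sensitivity formulas that the paper later needs anyway for implementation; the paper's convexity argument is shorter and dispatches conditions (ii) and (iii) simultaneously. Incidentally, your treatment of (iii) is vacuous for the same reason your (ii) works: if $\bm{\lambda}^*=\mathbf{0}$ with $h>0$, then $\min f>0$ and $\tilde{\mathcal{O}}(t)=\emptyset$, again contradicting nonempty interior, so no such points exist.
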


\begin{proof}
    Let $\mathcal{C}: [t_0, t_\text{f}) \rightrightarrows \mathbb{R}^p$ be the set-valued flow 
    associated with $h$, defined for all $t \in [t_0, t_\text{f})$ as 
    $\mathcal{C}(t) = \mathbb{R}^p\setminus\text{int}\big(\tilde{\mathcal{O}}(t)\big)$. To show 
    that $h$ is a CBF for \eqref{Eq:AgentDynamics1}, we have to verify that it satisfies the 
    following properties: (i) $h$ is continuously differentiable, (ii) 
    $\frac{\partial}{\partial\mathbf{p}}h(\mathbf{p}, t) \neq \mathbf{0}$ when 
    $h(\mathbf{p}, t) = 0$, and (iii) there exists an extended class-$\mathcal{K}_\infty$ function 
    $\alpha$ such that
    \begin{equation}
        \sup_{\mathbf{z} \in \mathbb{R}^z} \Dot{h}(\mathbf{p}, t) > -\alpha(h(\mathbf{p}, t))
        \label{Eq:CBFCCG3}
    \end{equation}
    for at least all $(\mathbf{p}, t) \in \mathcal{G}(\mathcal{C})$.

    Since the optimization problem in \eqref{Eq:CBFCCG} is convex, its solutions coincide with 
    those of the respective KKT conditions. In this case, the KKT system follows as
    \begin{equation}
        \bm{\psi}(\mathbf{p}, t, \bm{\eta}, \bm{\lambda}) :=
        \begin{bmatrix}
            \left(\dfrac{\partial f(\bm{\eta})}{\partial\bm{\eta}}\right)^\top
            + \tilde{\mathbf{G}}(t)^\top\bm{\lambda}\\[3mm]
            \tilde{\mathbf{G}}(t)\bm{\eta} + \tilde{\mathbf{c}}(t) - \mathbf{p}
        \end{bmatrix} = \mathbf{0},
        \label{Eq:CBFKKT}
    \end{equation}
    where $\bm{\lambda} \in \mathbb{R}^{p}$ is the Lagrange multiplier associated with the equality 
    constraint. From this setup, the partial derivatives of $\bm{\psi}$ with respect to 
    $(\mathbf{p}, t)$ and $(\bm{\eta}, \bm{\lambda})$ are given by
    \begin{equation}
        \begin{aligned}
            \frac{\partial \bm{\psi}(\mathbf{p}, t, \bm{\eta}, \bm{\lambda})}
            {\partial(\mathbf{p}, t)} &=
            \begin{bmatrix}
                \mathbf{0} & \Dot{\tilde{\mathbf{G}}}(t)^\top\bm{\lambda}\\
                -\mathbf{I} & 
                \Dot{\tilde{\mathbf{G}}}(t)\bm{\eta} + \Dot{\tilde{\mathbf{c}}}(t)
            \end{bmatrix},\\
            \frac{\partial \bm{\psi}(\mathbf{p}, t, \bm{\eta}, \bm{\lambda})}
            {\partial (\bm{\eta}, \bm{\lambda})} &=
            \begin{bmatrix}
                \dfrac{\partial^2 f(\bm{\eta})}{\partial\bm{\eta}^2}
                & \tilde{\mathbf{G}}(t)^\top\\[2.5mm]
                \tilde{\mathbf{G}}(t) & \mathbf{0}
            \end{bmatrix}.
        \end{aligned}
    \end{equation}
    Applying now the Schur complement to the latter yields
    \begin{equation}
        \begin{aligned}
            &\det\left(\frac{\partial \bm{\psi}(\mathbf{p}, t, \bm{\eta}, \bm{\lambda})}
            {\partial (\bm{\eta}, \bm{\lambda})}\right) =\\
            &-\det\left(\frac{\partial^2 f(\bm{\eta})}
            {\partial\bm{\eta}^2}\right)\det\left(\tilde{\mathbf{G}}(t)
            \left(\frac{\partial^2 f(\bm{\eta})}
            {\partial\bm{\eta}^2}\right)^{-1}\tilde{\mathbf{G}}(t)^\top\right).
        \end{aligned}
    \end{equation}
    As $f$ is strictly convex by assumption, meaning that the Hessian matrix of $f$
    is positive definite for all $\bm{\eta} \in \mathbb{R}^\eta$, and $\tilde{\mathbf{G}}(t)$ has 
    full row rank for all $t \in [t_0, t_\text{f})$, the previous determinant is always nonzero. 
    Therefore, by the Implicit Function Theorem, the KKT system defines $(\bm{\eta}, \bm{\lambda})$ 
    as a function of $(\mathbf{p}, t)$, i.e.,
    \begin{equation}
        (\bm{\eta}, \bm{\lambda}) = \bm{\ell}(\mathbf{p}, t),
    \end{equation}
    where the implicit function $\bm{\ell}: \mathbb{R}^p\times[t_0, t_\text{f}) \rightarrow 
    \mathbb{R}^\eta\times\mathbb{R}^p$ is continuously differentiable, with derivative given by
    \begin{equation}
        \frac{\partial \bm{\ell}(\mathbf{p}, t)}{\partial(\mathbf{p}, t)} 
        = - \left[\left(
        \frac{\partial \bm{\psi}(\cdot)}{\partial (\bm{\eta}, \bm{\lambda})}\right)^{-1} 
        \frac{\partial \bm{\psi}(\cdot)}{\partial (\mathbf{p}, t)}\right]_{
        (\bm{\eta}, \bm{\lambda}) = \bm{\ell}(\mathbf{p}, t)}.
        \label{Eq:Numerical1}
    \end{equation}
    Consequently, $h$ can be equivalently expressed as
    \begin{equation}
        h(\mathbf{p}, t) = f(\mathbf{E}_{\bm{\eta}}\bm{\ell}(\mathbf{p}, t)),
        \label{Eq:Numerical2}
    \end{equation}
    where $\mathbf{E}_{\bm{\eta}} \in \mathbb{R}^{\eta\times(\eta+p)}$ extracts $\bm{\eta}$ from 
    $(\bm{\eta}, \bm{\lambda})$, from which it becomes clear that $h$ is continuously 
    differentiable, with
    \begin{equation}
        \frac{\partial h(\mathbf{p}, t)}{\partial(\mathbf{p}, t)} = 
        \frac{\partial f(\bm{\eta})}{\partial\bm{\eta}}
        \bigg|_{\bm{\eta}=\mathbf{E}_{\bm{\eta}}\bm{\ell}(\mathbf{p}, t)}\mathbf{E}_{\bm{\eta}}
        \frac{\partial \bm{\ell}(\mathbf{p}, t)}{\partial(\mathbf{p}, t)},
        \label{Eq:Numerical3}
    \end{equation}
    hence proving (i).

    \newpage

    Finally, we observe that $h$ is convex in $\mathbf{p}$, as the epigraph of $h(\cdot, t)$ is the 
    projection of the set
    \begin{equation}
        \begin{aligned}
            \mathcal{S}(t) = \big\{(\mathbf{p}, s, \bm{\eta}) 
            \in \mathbb{R}^p\times\mathbb{R}\times\mathbb{R}&^\eta: f(\bm{\eta}) \leq s,\\
            &\tilde{\mathbf{G}}(t)\bm{\eta} + \tilde{\mathbf{c}}(t) = \mathbf{p}\big\}
        \end{aligned}
    \end{equation}
    in the $(\mathbf{p}, s)$-space, which is a convex set since $\mathcal{S}(t)$ is convex. 
    Thus, as $h$ is convex in $\mathbf{p}$ and $\tilde{\mathcal{O}}(t)$ has nonempty interior, it 
    follows that $\frac{\partial}{\partial\mathbf{p}} h(\mathbf{p}, t) \neq \mathbf{0}$ when 
    $h(\mathbf{p}, t) \geq 0$. Consequently, for any extended class-$\mathcal{K}_\infty$ function 
    $\alpha$, the inequality in \eqref{Eq:CBFCCG3} holds for all 
    $(\mathbf{p}, t) \in \mathcal{G}(\mathcal{C})$, as $\Dot{h}$ can be made arbitrarily large when 
    $\frac{\partial}{\partial\mathbf{p}} h(\mathbf{p}, t) \neq \mathbf{0}$. Hence, the properties 
    (ii) and (iii) are also satisfied, which completes the proof.
\end{proof}

The preceding result establishes a mild sufficient condition under which the function $h$, defined 
in \eqref{Eq:CBFCCG}, qualifies as a CBF for the agent dynamics in \eqref{Eq:AgentDynamics1}. 
Particularly, if each function $g_j$ is twice continuously differentiable and strictly convex, then 
$f$ inherits these properties and, by Theorem \ref{Th:TheoremCBGCCG}, $h$ constitutes a valid CBF 
for \eqref{Eq:AgentDynamics1}. To verify this claim, observe that
\begin{equation}
    \begin{aligned}
        \left(\frac{\partial f_j(\bm{\eta})}{\partial\bm{\eta}}\right)^\top &= 
        \mathbf{N}_\mathbf{A}^\top\mathbf{S}_j^\top\frac{\partial g_j(\bm{\xi}_j)}
        {\partial\bm{\xi}_j}\bigg|^\top_{\bm{\xi}_j=\mathbf{s}_j(\bm{\eta})},\\[1mm]
        \frac{\partial^2f_j(\bm{\eta})}{\partial\bm{\eta}^2} &= 
        \mathbf{N}_\mathbf{A}^\top\mathbf{S}_j^\top\frac{\partial^2g_j(\bm{\xi}_j)}
        {\partial\bm{\xi}_j^2}\bigg|_{\bm{\xi}_j=\mathbf{s}_j(\bm{\eta})}
        \mathbf{S}_j\mathbf{N}_\mathbf{A},
    \end{aligned}
\end{equation}
where $\mathbf{s}_j(\bm{\eta}) = \mathbf{S}_j\mathbf{A}^\dagger\mathbf{b} 
+ \mathbf{S}_j\mathbf{N}_\mathbf{A}\bm{\eta}$, and also note that
\begin{equation}
    \frac{\partial f(\bm{\eta})}{\partial\bm{\eta}} = \sum_{j \in \mathcal{J}} 
    \pi_j(\bm{\eta})\frac{\partial f_j(\bm{\eta})}{\partial\bm{\eta}},
\end{equation}
where each weight is defined as
\begin{equation}
    \pi_j(\bm{\eta}) = \exp(\gamma(f_j(\bm{\eta}) - f(\bm{\eta})) - \ln(G+1)).
\end{equation}
Consequently, the Hessian matrix of $f$ follows as
\begin{equation}
    \begin{aligned}
        &\frac{\partial^2f(\bm{\eta})}{\partial\bm{\eta}^2} = 
        \mathbf{N}_\mathbf{A}^\top\,\underset{j \in \mathcal{J}}{\text{diag}}
        \left(\pi_j(\bm{\eta})\frac{\partial^2g_j(\bm{\xi}_j)}{\partial\bm{\xi}_j^2}
        \bigg|_{\bm{\xi}_j=\mathbf{s}_j(\bm{\eta})}\right)\mathbf{N}_\mathbf{A}\\[1mm]
        &+ \gamma\sum_{j \in \mathcal{J}}\pi_j(\bm{\eta})
        \left(\frac{\partial (f_j(\bm{\eta}) - f(\bm{\eta}))}{\partial\bm{\eta}}\right)^\top
        \frac{\partial (f_j(\bm{\eta}) - f(\bm{\eta}))}{\partial\bm{\eta}}.
    \end{aligned}
\end{equation}
From this expression, it becomes clear that the Hessian matrix 
$\frac{\partial^2}{\partial\bm{\eta}^2} f(\bm{\eta})$ is always positive definite when each 
$g_j$ is strictly convex, as the first term is always positive definite under strict convexity of 
each $g_j$, while the second term is always positive semi-definite, which confirms the initial 
claim.

\begin{remark}[Implementation Details] \label{Rm:Implementation}
    In general, the implicit function $\bm{\ell}$ introduced in the proof of Theorem 
    \ref{Th:TheoremCBGCCG}, and therefore the CBF $h$ defined by \eqref{Eq:CBFCCG}, cannot be 
    expressed in closed form. However, under the conditions of Theorem \ref{Th:TheoremCBGCCG}, the 
    CBF $h$ can be locally approximated using a first-order Taylor expansion. For instance, the 
    first-order Taylor approximation of $h$ around $(\mathbf{p}_0, t_0)$ is given by
    \begin{equation}
        h(\mathbf{p}, t) \simeq h(\mathbf{p}_0, t_0) 
        + \frac{\partial h(\mathbf{p}, t)}{\partial(\mathbf{p}, t)}\bigg|_{(\mathbf{p}, t)
        = (\mathbf{p}_0, t_0)}(\mathbf{p}-\mathbf{p}_0, t-t_0),
    \end{equation}
    where the value of $h$ and its gradient at $(\mathbf{p}_0, t_0)$ are computed using 
    \eqref{Eq:Numerical1}-\eqref{Eq:Numerical3} with 
    $\bm{\ell}(\mathbf{p}_0, t_0) = (\bm{\eta}_0, \bm{\lambda}_0)$, where the vector
    $(\bm{\eta}_0, \bm{\lambda}_0)$ \hfill is \hfill obtained \hfill by \hfill numerically \hfill solving \hfill the \hfill KKT \hfill system

    \newpage
    
    \noindent at $(\mathbf{p}_0, t_0)$. For example, $(\bm{\eta}_0, \bm{\lambda}_0)$ can be 
    obtained by performing the following steps:
    \begin{equation}
        \begin{aligned}
            \bm{\alpha}_0 &= \underset{\bm{\alpha} \in \mathbb{R}^{\eta-p}}{\arg\min}\,\,
            f\left(\tilde{\mathbf{G}}(t_0)^\dagger
            \left(\mathbf{p}_0-\tilde{\mathbf{c}}(t_0)\right) 
            + \mathbf{N}_{\tilde{\mathbf{G}}(t_0)}\bm{\alpha}\right),\\
            \bm{\eta}_0 &= \tilde{\mathbf{G}}(t_0)^\dagger
            \left(\mathbf{p}_0-\tilde{\mathbf{c}}(t_0)\right) 
            + \mathbf{N}_{\tilde{\mathbf{G}}(t_0)}\bm{\alpha}_0,\\
            \bm{\lambda}_0 &= \left(\tilde{\mathbf{G}}(t_0)^\dagger\right)^\top
            \left(\dfrac{\partial f(\bm{\eta})}{\partial\bm{\eta}}
            \bigg|_{\bm{\eta}=\bm{\eta}_0}\right)^\top,
        \end{aligned}
        \label{Eq:Implementation}
    \end{equation}
    where the pseudoinverse of $\tilde{\mathbf{G}}(t_0)$ can be computed as
    \begin{equation}
        \tilde{\mathbf{G}}(t_0)^\dagger = \tilde{\mathbf{G}}(t_0)^\top
        \left(\tilde{\mathbf{G}}(t_0)\tilde{\mathbf{G}}(t_0)^\top\right)^{-1}
    \end{equation}
    because $\tilde{\mathbf{G}}(t_0)$ has full row rank, and 
    $\mathbf{N}_{\tilde{\mathbf{G}}(t_0)} \in \mathbb{R}^{\eta\times(\eta-p)}$ is a matrix whose 
    columns form an orthonormal basis for the null space of $\tilde{\mathbf{G}}(t_0)$. The first 
    step (computation of $\bm{\alpha}_0$) involves solving an unconstrained convex optimization 
    problem, which can be addressed efficiently using appropriate solvers.

    While the first-order approximation is adequate for a small sampling period, 
    higher-order expansions can be constructed if $h$ has a higher smoothness degree 
    \cite{krantz2002implicit}. These arise by further differentiating \eqref{Eq:Numerical3} and 
    \eqref{Eq:Numerical1}, and then evaluating the resulting expressions at $(\mathbf{p}_0, t_0)$ 
    with $\bm{\ell}(\mathbf{p}_0, t_0) = (\bm{\eta}_0, \bm{\lambda}_0)$.
\end{remark}

\begin{remark}[Ensuring Global Safety] \label{Rm:GlobalSafety}
    As discussed in Section \ref{Sec:Solution1}, global safety is guaranteed if, at every sampling 
    time $t_k$, the condition 
    $\mathbf{p}_k \in \text{int}(\cap_{i \in \mathcal{I}}\mathcal{C}_{i, k}(t_k))$ holds. If we 
    employ the infinite-horizon estimation scheme described at the beginning of Section 
    \ref{Sec:Solution2}, then this condition is preserved at all sampling instants if it 
    holds initially at $t = 0$. This follows from two facts: (i) if at the time $t_k$ we have 
    $\mathbf{p}_k \in \text{int}(\cap_{i \in \mathcal{I}}\mathcal{C}_{i, k}(t_k))$, then the 
    trajectory of the agent satisfies 
    $\bm{\varphi}(t) \in \text{int}(\cap_{i \in \mathcal{I}}\mathcal{C}_{i, k}(t))$ for all 
    $t \in [t_k, t_{k+1})$ (ensured by the strict inequality in \eqref{Eq:SafetyDetail}), and (ii) 
    the infinite-horizon estimator, along with the CCG-to-CBF conversion process, ensures that 
    $\mathcal{C}_{i, k}(t_{k+1}^-) \subseteq \mathcal{C}_{i, k+1}(t_{k+1})$ for all 
    $i \in \mathcal{I}$. However, using the finite-horizon estimator to keep a fixed computational 
    load means that $\mathbf{p}_{k+1} \in \text{int}(\mathcal{C}_{i, k}(t_{k+1}^-))$ does not 
    necessarily imply that $\mathbf{p}_{k+1} \in \text{int}(\mathcal{C}_{i, k+1}(t_{k+1}))$, due to 
    the conservatism introduced by discarding the measurement from time $t_{k-N}$. Several 
    alternatives are possible to still use the finite-horizon estimator while maintaining 
    theoretical guarantees of safety, such as rejecting the new estimate produced by this estimator
    whenever the condition is not satisfied and continuing the propagation of the estimate from 
    $t_{k+1}^-$ (before the update) over the next interval. Nevertheless, we note that the 
    finite-horizon estimator in \cite{rego2024explicit}, which uses a Luenberger observer as the 
    conservative estimator, can be designed using a 
    \textit{deadbeat} gain to remove the conservatism in $n_i$ time steps (under an observability 
    assumption), as shown in \cite{silvestre2017set} for fault detection. Overall, this discussion 
    highlights the importance for obstacle estimates to never contain the agent position to 
    preserve theoretical guarantees of global safety.
\end{remark}


\section{Simulation Results} \label{Sec:Results}

This section presents three simulation examples that demonstrate the framework proposed in this 
paper. The first example considers an agent navigating in a known static environment and serves to 
illustrate different problem geometries and the CCG-to-CBF conversion procedure. The remaining two 
exam-

\newpage

\begin{figure}[H]
    \centering 
    \subfloat[Agent with ellipsoidal geometry]{
        \includegraphics[width=0.98\linewidth]{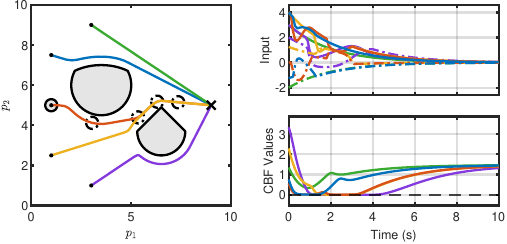}}\\
    \subfloat[Agent with polytopic geometry]{
        \includegraphics[width=0.98\linewidth]{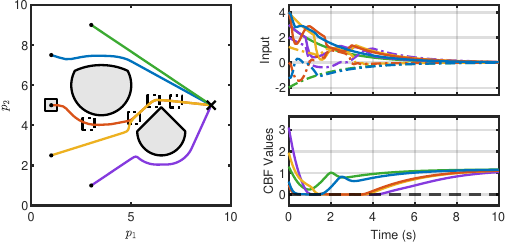}}
    \caption{Navigation of an agent with single-integrator dynamics and ellipsoidal and polytopic 
    geometry around obstacles with mixed ellipsoidal and polytopic geometry. The left-hand plots 
    show trajectories from different initial positions, and the right-hand plots illustrate the 
    respective time evolution of the control inputs and overall CBF values. The colors in the 
    right-hand plots correspond to those of the associated trajectories on the left-hand plots. In 
    the input plots, solid lines represent the first input component and dotted lines the second.}
    \label{Fig:Example1}
\end{figure}
\vspace{-1.4mm}

\noindent ples involve dynamic obstacles with uncertain linear dynamics and demonstrate the overall 
framework proposed in the paper.

\begin{example} \label{Ex:Example1}
    Fig. \ref{Fig:Example1} exemplifies the safe navigation of a rigid-body agent in a known static 
    environment with two obstacles, where both the agent and obstacle body sets are CCGs. Two 
    different agent geometries are considered: an ellipsoidal agent, shown in Fig. 
    \ref{Fig:Example1} (a), and a polytopic agent, illustrated in Fig. \ref{Fig:Example1} (b). In 
    this example, the agent has single-integrator dynamics:
    \begin{equation}
        \Dot{\mathbf{p}} = \mathbf{z},
    \end{equation}
    with $\mathbf{p}, \mathbf{z} \in \mathbb{R}^2$, and the agent is controlled using the approach 
    proposed in this paper, without the estimation component, to safely guide it to a goal point 
    $\bar{\mathbf{p}} \in \mathbb{R}^2$. Convergence to the goal is captured by a nominal 
    controller $\mathbf{k}_\text{d}: \mathbb{R}^2 \rightarrow \mathbb{R}^2$, defined as
    \begin{equation}
        \mathbf{k}_\text{d}(\mathbf{p}) = K(\mathbf{p}-\bar{\mathbf{p}})
    \end{equation}
    for all $\mathbf{p} \in \mathbb{R}^2$, where $K$ is a negative gain. 

    The obstacles are modeled as CCGs with mixed geometry: one of them is defined as the 
    intersection of a polytope with an ellipsoid, and the other is formed by the intersection of 
    two ellipsoids. An ellipsoid $\mathcal{E} \subset \mathbb{R}^p$ is a CCG of the form
    \begin{equation}
        \mathcal{E} = \left(\mathbf{G}_e, \mathbf{c}_e, [\,\,], [\,\,], \mathfrak{G}_e\right),
    \end{equation}
    where $\mathfrak{G}_e$ is the unit $\ell_2$-ball in $\mathbb{R}^p$. The generator set 
    $\mathfrak{G}_e$ can be described as the zero-sublevel set of a function 
    $g_e: \mathbb{R}^p \rightarrow \mathbb{R}$, defined for all $\bm{\xi}_e \in \mathbb{R}^p$ as
    \begin{equation}
        g_e(\bm{\xi}_e) = \frac{1}{2}\|\bm{\xi}_e\|^2 - \frac{1}{2},
    \end{equation}
    which is twice continuously differentiable and strictly convex. 
    
    \newpage
    
    \noindent A polytope (or CZ) $\mathcal{Z} \subset \mathbb{R}^p$ is a CCG of the form
    \begin{equation}
        \mathcal{Z} = \left(\mathbf{G}_z, \mathbf{c}_z, \mathbf{A}_z, \mathbf{b}_z, 
        \mathfrak{G}_z\right),
    \end{equation}
    where the set $\mathfrak{G}_z$ is the unit $\ell_\infty$-ball in $\mathbb{R}^{\xi_z}$. However, 
    since the \hfill $\ell_\infty$-norm \hfill is \hfill not \hfill differentiable \hfill neither 
    \hfill strictly \hfill convex, \hfill we cannot directly derive a single suitable generator 
    function. To address this, we can decompose $\mathfrak{G}_z$ as 
    $\mathfrak{G}_z = \mathcal{G}_z^{\xi_z} = \times_{j=1}^{\xi_z}\mathcal{G}_z$, where 
    $\mathcal{G}_z$ is the unit $\ell_2$-ball in $\mathbb{R}$, which can be described as the 
    zero-sublevel set of a function $g_z : \mathbb{R} \rightarrow \mathbb{R}$ defined as
    \begin{equation}
        g_z(s) = \frac{1}{2}s^2 - \frac{1}{2},
    \end{equation}
    which is twice continuously differentiable and strictly convex.

    The simulations are conducted with a sampling period of $T_s = \SI{0.1}{s}$. Over each 
    sampling interval, the safe controller defined in \eqref{Eq:SolutionSafetyFilter1} is applied 
    to the agent, with design parameters $\bar{\beta} = \gamma = 10$ and 
    $\alpha(s) = \bar{\alpha}s$ for all $s \in \mathbb{R}$, with $\bar{\alpha} = 10$. The 
    obstacle-specific CBFs are approximated as in Remark \ref{Rm:Implementation}, by using a 
    first-order Taylor expansion around the position of the agent at each sampling instant. The 
    unconstrained optimization step in \eqref{Eq:Implementation} is solved using a built-in 
    quasi-Newton method in \textsc{Matlab}, with an average computation time of \SI{3}{ms} per 
    obstacle per sampling step. As shown in Fig. \ref{Fig:Example1}, collision-free motion is 
    achieved for all the initial agent positions, confirmed by the nonnegativity of the overall CBF 
    values over time. The conservativeness of the resulting trajectories depends on the smoothing 
    parameters $\bar{\beta}$ and $\gamma$, as well as the decay rate $\bar{\alpha}$.
\end{example}

\begin{example} \label{Ex:Example2}
    Fig. \ref{Fig:Example2} exemplifies the navigation of an ellipsoidal agent in an environment 
    cluttered with dynamic obstacles with uncertain linear dynamics, showcasing the overall 
    framework proposed in this paper. In this example, the agent also has single-integrator 
    dynamics and is guided toward a target point using the same nominal controller from Example 
    \ref{Ex:Example1}.

    The environment contains two obstacles with different geometries (one polytopic and one 
    ellipsoidal), and each obstacle is known to evolve with an uncertain velocity. Specifically, 
    the motion of obstacle $i$ is governed by
    \begin{equation}
        \Dot{\mathbf{o}}_i = \mathbf{w}_i,
    \end{equation}
    with $\mathbf{o}_i, \mathbf{w}_i \in \mathbb{R}^2$, where $\mathbf{w}_i$ is an uncertain 
    velocity, known to belong to an ellipsoidal set
    $\mathcal{W} = (\mathbf{G}_w, \mathbf{0}, [\,\,], [\,\,], \mathfrak{G}_w) \subset 
    \mathbb{R}^2$. In the simulation, the actual velocities are constant and given by 
    $\mathbf{w}_1 = -\mathbf{w}_2 = [0\,\,-0.5]^\top$. The measurement model is
    \begin{equation}
        \mathbf{y}_{i, k} = \mathbf{o}_{i, k} + \mathbf{v}_{i, k},
    \end{equation}
    where the measurement noise $\mathbf{v}_{i, k}$ is a random variable drawn uniformly from an 
    ellipsoid $\mathcal{V} = (\mathbf{G}_v, \mathbf{0}, [\,\,], [\,\,], \mathfrak{G}_v) \subset 
    \mathbb{R}^2$.

    The simulation setup matches that of Example \ref{Ex:Example1}, with the additional parameters
    $\mathbf{G}_w = 0.5\mathbf{I}$ and $\mathbf{G}_v = 0.2\mathbf{I}$. Moreover, the finite-horizon 
    estimation scheme described in Section \ref{Sec:Solution2} is implemented with a horizon of 
    $N = 5$, with the conservative estimator being defined as 
    $\bar{\mathcal{X}}_{i, k} = \mathbf{y}_{i, k} - \mathcal{V}$. For the considered horizon 
    length, the unconstrained optimization step in \eqref{Eq:Implementation} takes an average of 
    \SI{5}{ms} per obstacle per sampling step.
    
    As illustrated in Fig. \ref{Fig:Example2} (a), collision-free motion is achieved throughout the 
    simulation, confirmed by the nonnegative overall CBF values over time. Fig. \ref{Fig:Example2}
    (b) shows snapshots of the agent and the estimated obstacle sets at selected time instants. The 
    orange dotted contours denote the estimated CCG obstacle
    
    \newpage

    \begin{figure}[H]
        \centering 
        \subfloat[Trajectories and temporal profiles]{
            \includegraphics[width=0.98\linewidth]{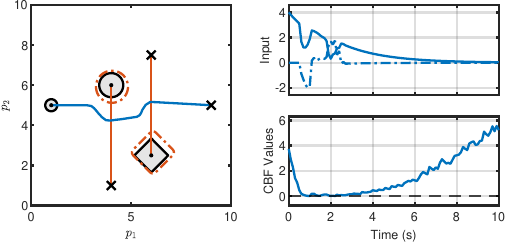}}\\
        \subfloat[Snapshots]{
            \includegraphics[width=0.98\linewidth]{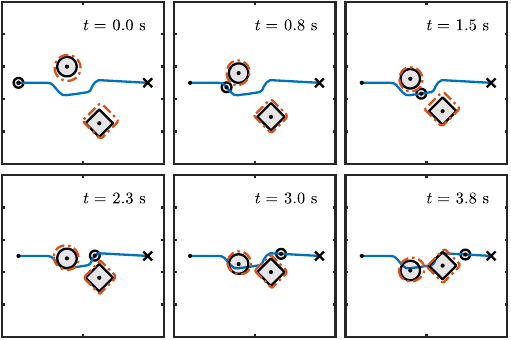}}
        \caption{Safe navigation of an ellipsoidal agent with single-integrator dynamics around 
        moving obstacles with uncertain linear dynamics. The left-hand plot in subfigure (a) 
        illustrates the agent and obstacle trajectories, and the right-hand plots shows the 
        respective time evolution of the control input and overall CBF values. In the input plot, 
        the solid line represents the first input component and the dotted line the second. 
        Subfigure (b) presents snapshots of the agent and obstacle configurations at selected time 
        instants. The orange dotted contours illustrate the estimated CCG obstacle sets used for 
        collision avoidance.}
        \label{Fig:Example2}
    \end{figure}
    
    \noindent sets, which enclose the actual obstacle sets. During 
    the first $N$ sampling steps, the state estimates are obtained directly from the conservative 
    estimator, resulting in looser estimates. After the initial transient, the estimator produces 
    steadier and tighter estimates, with the degree of tightness being determined by the sets 
    $\mathcal{W}$ and $\mathcal{V}$, as well as the horizon length $N$.
\end{example}

\begin{example} \label{Ex:Example3}
    Finally, Fig. \ref{Fig:Example3} illustrates a scenario similar to that from Example 
    \ref{Ex:Example2}, but where now both the agent and obstacles have second-order dynamics. 
    Specifically, the agent dynamics are inspired by a satellite subject to a gravitational force 
    and are described by the second-order strict-feedback system
    \begin{equation}
        \begin{aligned}
            \Dot{\mathbf{p}} &= \mathbf{z},\\
            \Dot{\mathbf{z}} &= \mathbf{f}_1(\mathbf{p}) + \mathbf{u},
        \end{aligned}
    \end{equation}
    with $\mathbf{p}, \mathbf{z}, \mathbf{u} \in \mathbb{R}^2$, where 
    $\mathbf{f}_1(\mathbf{p}) = (\|\mathbf{p}\|+0.1)^{-3}\mathbf{p}$ represents a gravity-like 
    effect. Similar to the previous example, the agent is controlled using the overall approach 
    proposed in this paper, with the addition of the backstepping-based design in Section 
    \ref{Sec:Solution1A}, to safely guide it to a target point $\bar{\mathbf{p}}$. Convergence to 
    the goal is captured by a nominal controller 
    $\mathbf{k}_{\text{d}, 1}: \mathbb{R}^2\times\mathbb{R}^2 \rightarrow \mathbb{R}^2$, defined 
    for all $(\mathbf{p}, \mathbf{z}) \in \mathbb{R}^2\times\mathbb{R}^2$ as
    \begin{equation}
        \mathbf{k}_{\text{d}, 1}(\mathbf{p}, \mathbf{z}) = -\mathbf{f}_1(\mathbf{p})
        + K_1(\mathbf{z} - K(\mathbf{p}-\bar{\mathbf{p}})),
    \end{equation}
    where $K$ and $K_1$ are negative gains. 
    
    \newpage

    \begin{figure}[H]
        \centering 
        \subfloat[Trajectories and temporal profiles]{
            \includegraphics[width=0.98\linewidth]{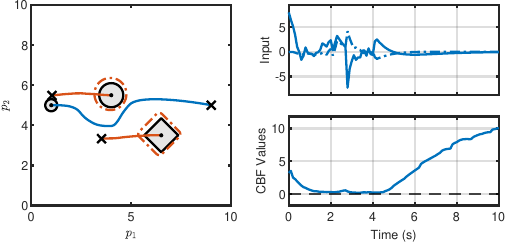}}\\
        \subfloat[Snapshots]{
            \includegraphics[width=0.98\linewidth]{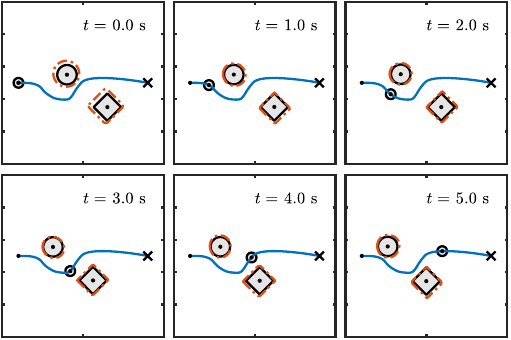}}
        \caption{Navigation of an ellipsoidal agent with second-order strict-feedback dynamics 
        around moving obstacles with uncertain linear dynamics. The left-hand plot in subfigure (a) 
        illustrates the agent and obstacle trajectories, and the right-hand plots shows the 
        respective time evolution of the control input and top-level CBF values. In the input plot, 
        the solid line represents the first input component and the dotted line the second. 
        Subfigure (b) presents snapshots of the agent and obstacle configurations at selected time 
        instants. The orange dotted contours illustrate the estimated CCG obstacle sets used for 
        avoidance.}
        \label{Fig:Example3}
    \end{figure}
    
    The environment is populated with two obstacles with the same geometries from Example 
    \ref{Ex:Example2}, and each obstacle evolves subject to a random acceleration. Specifically, 
    the motion of obstacle $i$ is governed by
    \begin{equation}
        \begin{aligned}
            \Dot{\mathbf{o}}_i &= \mathbf{z}_i,\\
            \Dot{\mathbf{z}}_i &= \mathbf{w}_i,
        \end{aligned}
    \end{equation}
    with $\mathbf{o}_i, \mathbf{z}_i, \mathbf{w}_i \in \mathbb{R}^2$, where $\mathbf{w}_i$ is a 
    random variable drawn uniformly from an ellipsoid 
    $\mathcal{W} = (\mathbf{G}_w, \mathbf{0}, [\,\,], [\,\,], \mathfrak{G}_w) \subset 
    \mathbb{R}^2$. The measurement model is of the form
    \begin{equation}
        \mathbf{y}_{i, k} = (\mathbf{o}_{i, k}, \mathbf{z}_{i, k})  + \mathbf{v}_{i, k},
    \end{equation}
    where the measurement noise $\mathbf{v}_{i, k}$ is a random variable drawn uniformly from an 
    ellipsoid $\mathcal{V} = (\mathbf{G}_v, \mathbf{0}, [\,\,], [\,\,], \mathfrak{G}_v) \subset 
    \mathbb{R}^4$.

    The simulation setup matches that of the previous examples, with 
    $\mathbf{G}_w = 0.5\mathbf{I}_2$, $\mathbf{G}_v = 0.2\mathbf{I}_4$, the backstepping parameters 
    $\varsigma = 0.1$, $\bar{\sigma} = 10$, and $\epsilon = 10$, and the CBF rate 
    $\alpha_1(s) = \bar{\alpha}_1s$ for all $s \in \mathbb{R}$, with $\bar{\alpha}_1 = 10$. The 
    finite-horizon estimator is implemented with a horizon of $N = 5$, with the conservative 
    estimator again defined as $\bar{\mathcal{X}}_{i, k} = \mathbf{y}_{i, k} - \mathcal{V}$. For 
    the considered horizon length, the unconstrained optimization step in \eqref{Eq:Implementation} 
    takes an average of \SI{8}{ms} per obstacle per sampling step. 
    
    As illustrated in Fig. \ref{Fig:Example3} (a), collision-free motion is achieved throughout the 
    simulation, confirmed by the nonnegative top-
    
    \newpage

    \noindent level CBF values over time. Additionally, Fig. \ref{Fig:Example3} (b) presents 
    snapshots of the agent and the estimated CCG obstacle sets at selected time instants. As in 
    Example \ref{Ex:Example2}, during the first $N$ sampling steps, the state estimates are 
    obtained directly from the conservative estimator, resulting in looser estimates. After this 
    initial phase, the finite-horizon estimator yields steadier and tighter estimates. Also, 
    extending to an agent with second-order dynamics results in smoother trajectories.
\end{example}


\section{Conclusion} \label{Sec:Conclusion}

This paper introduced a control strategy that combines CBF-based safety filtering with guaranteed 
state estimation based on CCGs for safe navigation around obstacles with uncertain linear dynamics. 
At each sampling instant, the approach consists in obtaining a CCG estimate of each obstacle using 
a finite-horizon estimator, and each estimate is then propagated over the sampling interval to 
obtain a CCG-valued flow describing the estimated obstacle evolution. To convert these CCG-valued 
flows into CBFs, we developed a procedure that enables this conversion, which ultimately produces a 
CBF by means of a convex optimization problem, whose validity is established by the Implicit 
Function Theorem. The resulting obstacle-specific CBFs are then combined into a single CBF using a 
smooth approximation of the minimum function, and the overall CBF is used to design a safety filter 
through the standard QP-based approach. Since CCGs support Minkowski sums, the proposed approach 
naturally handles rigid-body agents by treating the agent as a point and enlarging the obstacle 
sets with the agent geometry. While the main contribution is general, our analysis focused on 
agents with first-order control-affine dynamics and second-order strict-feedback dynamics, and we 
demonstrated the proposed approach through several simulation examples.

Future directions include extending the methodology to handle obstacles with uncertain nonlinear 
dynamics---potentially through ReLU neural network models and hybrid zonotopes---and employing the 
framework for safe cooperative navigation of multi-agent teams. Additionally, performing an 
experimental validation could be a valuable next step to gather real-world data supporting the 
effectiveness of the method. 


\bibliographystyle{ieeetr}
\bibliography{Refs}

\begin{IEEEbiography}[
{\includegraphics[width=1in,height=1.25in,keepaspectratio]{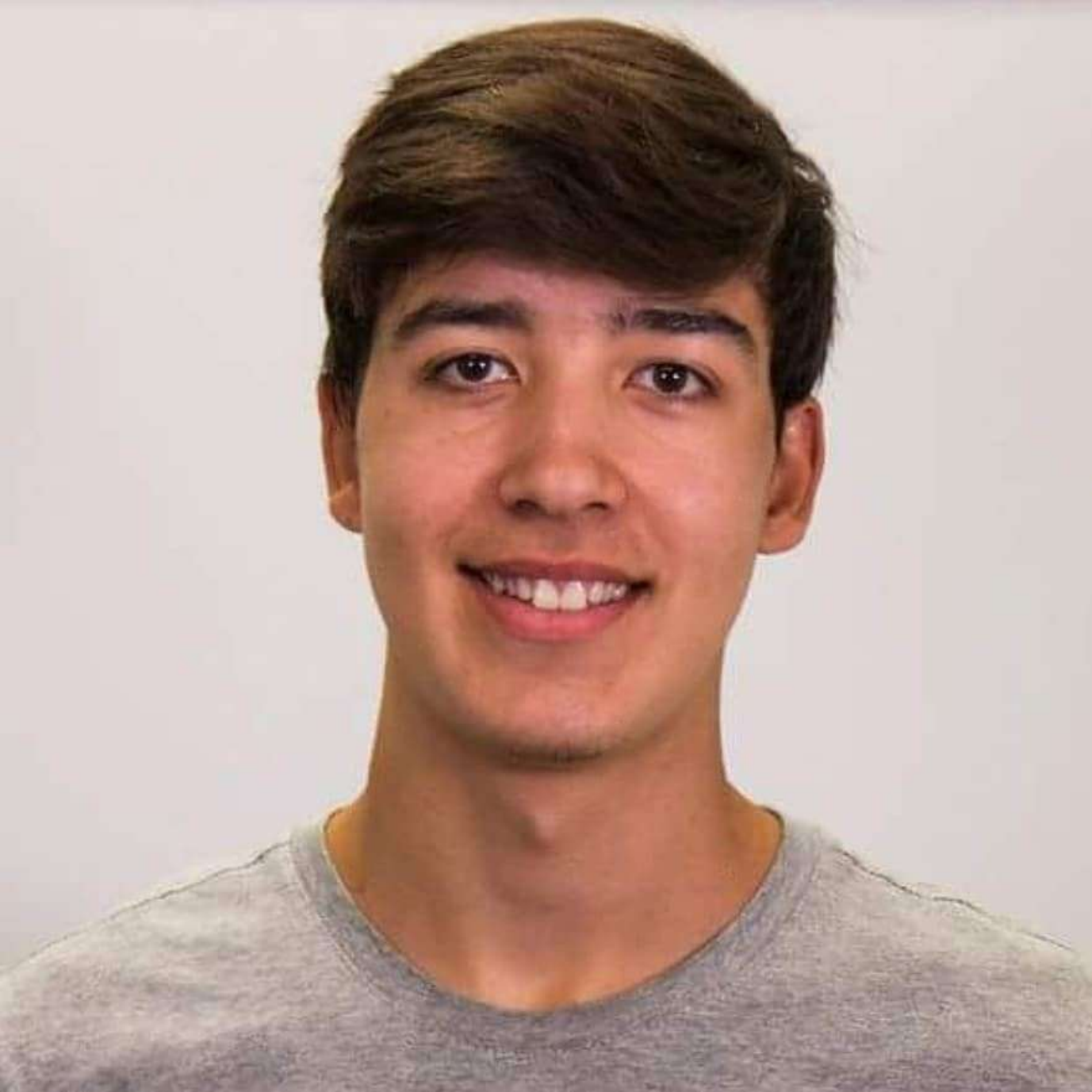}}]
{Hugo Matias} received his B.Sc. and M.Sc. degrees in Electrical and Computer Engineering from the 
Instituto Superior Técnico, Lisbon, Portugal, in 2021 and 2023, respectively. He is a doctoral 
candidate in Electrical and Computer Engineering, specializing in Systems, Decision and Control, at 
the NOVA School of Science and Technology, Caparica, Portugal. His research interests include 
safety-critical control, state estimation, nonlinear optimization, distributed systems, and 
computer networks.
\end{IEEEbiography}

\begin{IEEEbiography}[
{\includegraphics[width=1in,height=1.25in,keepaspectratio]{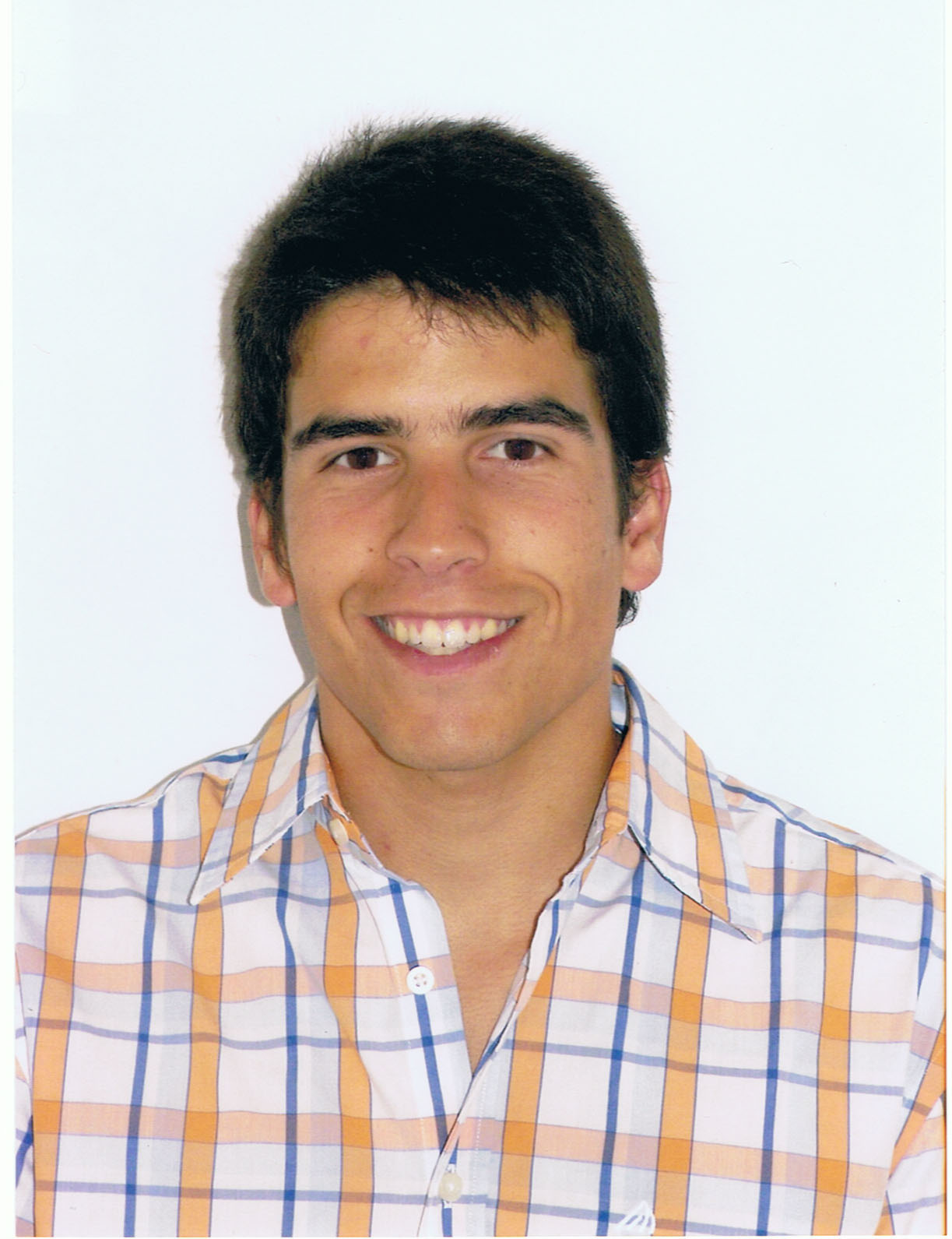}}]
{Daniel Silvestre} received his B.Sc. in Computer Networks in 2008 from Instituto Superior Técnico, 
Lisbon, Portugal, and his M.Sc. in Advanced Computing in 2009 from the Imperial College London, 
London, United Kingdom. In 2017, he received his Ph.D. (with the highest honors) in Electrical and 
Computer Engineering from the former university. Currently, he is with the NOVA School of Science 
and Technology, Caparica, Portugal, and with the Institute for Systems and Robotics, Instituto 
Superior Técnico, Lisbon, Portugal. His research interests include fault detection and isolation, 
distributed systems, guaranteed state estimation, computer networks, optimal control and nonlinear 
optimization.
\end{IEEEbiography}

\vfill

\end{document}